\newtheorem{theorem}{Theorem}[section]
\newtheorem{lemma}[theorem]{Lemma}
\newtheorem{definition}[theorem]{Definition}
\newtheorem{fact}[theorem]{Fact}
\newtheorem{remark}[theorem]{Remark}
\newcommand{\alglinelabel}{%
  \addtocounter{ALC@line}{-1}
  \refstepcounter{ALC@line}
  \label
}
\DeclareMathOperator{\dist}{dist}
\DeclareMathOperator{\cost}{cost}
\DeclareMathOperator{\poly}{poly}
\DeclareMathOperator{\OPT}{OPT}
\DeclareMathOperator{\OUR}{OUR}
\newcommand{\cO}{\mathcal{O}}
\newcommand{\tO}{\widetilde{\mathcal{O}}}
\newcommand{\tG}{\widetilde{G}}
\renewcommand{\Pr}{\mathbb{P}}
\newcommand{\eps}{\varepsilon}
\newcommand{\ab}[1]{\left<{#1}\right>} 
\newcommand{\rb}[1]{\left( #1 \right)} 
\newcommand{\vstar}{v^\star}
\newcommand{\Cstar}{C^\star}
\newcommand{\clusterw}{\textsc{ClusterW}}
\newcommand{\ppivot}{\textsc{PPivot}}
\newcommand{\ouralgo}{\textsc{OurAlgo}}
\newcommand{\clusterwparam}[1]{\textsc{ClusterW-#1}}
\newcommand{\ppivotparam}[1]{\textsc{PPivot-#1}}
\newcommand{\ouralgoparam}[1]{\textsc{OurAlgo-#1}}
\newcommand{\datadblp}{\textsf{dblp}}
\newcommand{\datauk}{\textsf{uk}}
\newcommand{\datait}{\textsf{it}}
\newcommand{\datatwitter}{\textsf{twitter}}
\newcommand{\datawebbase}{\textsf{webbase}}
\newcommand{\prob}[1]{\Pr \left[ #1 \right]}
\newcommand{\E}[1]{{\mathbb{E}}\left[#1\right]}
\newcommand{\eqdef}{\stackrel{\text{\tiny\rm def}}{=}}
\begin{document}

\onecolumn

\icmltitle{Correlation Clustering in Constant Many Parallel Rounds}




\begin{icmlauthorlist}
\icmlauthor{Vincent Cohen-Addad}{goo}
\icmlauthor{Silvio Lattanzi}{goo}
\icmlauthor{Slobodan Mitrović}{mit}
\icmlauthor{Ashkan Norouzi-Fard}{goo}
\icmlauthor{Nikos Parotsidis}{goo}
\icmlauthor{Jakub Tarnawski}{msr}
\end{icmlauthorlist}

\icmlaffiliation{mit}{CSAIL, MIT, Cambridge, MA, USA}
\icmlaffiliation{goo}{Google Research, Zürich, Switzerland}
\icmlaffiliation{msr}{Microsoft Research, Redmond, WA, USA}

\icmlkeywords{Correlation clustering, MPC, approximation algorithm}

\vskip 0.3in

\printAffiliationsAndNotice{}

\begin{abstract}
Correlation clustering is a central topic in unsupervised learning, with many applications in ML and data mining.  In correlation clustering, one receives as input a signed graph and the goal is to partition it to minimize the number of disagreements. In this work we propose a massively parallel computation (MPC) algorithm for this problem that is considerably faster than prior work. In particular, our algorithm uses machines with memory sublinear in the number of nodes in the graph and returns a constant approximation while running only for a constant number of rounds. To the best of our knowledge, our algorithm is the first that can provably approximate a clustering problem on graphs using only a constant number of MPC rounds in the sublinear memory regime. We complement our analysis with an experimental analysis of our techniques.
\end{abstract}

\section{Introduction}

Clustering is a classic problem in machine learning. The goal of clustering is to partition a given set of objects into sets
so that objects in the same cluster are similar to each other while objects in different clusters are dissimilar. One of the
most studied formulations of this problem is \emph{correlation clustering}.
Thanks to its simple and natural formulation,
this clustering variant has many applications in finding clustering ensembles~\cite{bonchi2013overlapping}, in duplicate 
detection~\cite{arasu2009large}, community detection~\cite{chen2012clustering}, disambiguation 
tasks~\cite{kalashnikov2008web}, and automated labelling~\cite{agrawal2009generating, chakrabarti2008graph}.

Correlation clustering was first formulated by~\citet{bansal2004correlation}.
Formally, in this problem we are given as input a weighted graph with $n$ nodes,
where positive edges represent similarities between nodes and negative edges represent dissimilarities
between them. We are interested in clustering the nodes to minimize the sum of the weights of the negative edges
contained inside any cluster plus the sum of positive edges crossing any two clusters. The problem is known to be NP-hard, and much
attention has been paid to designing approximation algorithms for the minimization version of the problem, as well as for its
complementary version where one is interested in maximizing agreement. In particular, for the most studied version of the
problem, where the weights are restricted to be in $\{-1,+1\}$, a polynomial-time approximation scheme is known for
the maximization version of the problem~\cite{bansal2004correlation} and a $2.06$-approximation algorithm is known for its
minimization version~\cite{chawla2015near}. Furthermore, when weights are in $\{-1,+1\}$ and the number of clusters is
upper-bounded by $k$, a polynomial-time approximation scheme is known also for the minimization version of the 
problem~\cite{giotis2005correlation}. For arbitrary weights, we know a $0.7666$-approximation algorithm for the maximization
version of the problem~\cite{charikar2005clustering, swamy2004correlation} and an $O(\log n)$-approximation for the 
minimization version of the problem~\cite{demaine2006correlation}.

One main drawback of classic solutions for correlation clustering is that they do not scale very well to very large networks. 
Thus, as the magnitude of available data grows, it becomes increasingly important to design efficient parallel algorithms for 
this problem. Unfortunately, obtaining such algorithms is often challenging because classic solutions to graph problems are
inherently sequential, e.g., the algorithm is defined iteratively and in an adaptive manner. Concretely, a well-known and widely used
algorithm for the unweighted minimization version of the problem requires solving a linear program or running a so-called \emph{Pivot}
algorithm \cite{ailon2008aggregating,chawla2015near}.
Designing an efficient parallel linear program solver, if one exists at all, is a major challenge. 
The Pivot algorithm is extremely elegant
and simple: it starts by selecting a node uniformly at random in the graph; then it creates a cluster by clustering together the 
node with all its positive neighbors; finally the algorithm recurs on the rest of the graph. Interestingly, this simple algorithm
returns a $3$-approximation to the minimization version of the problem when the weights are in $\{-1,+1\}$.
However, despite its simplicity, it is quite challenging to parallelize this algorithm efficiently. A strong step in this direction was presented
by~\citet{chierichetti2014correlation}, who show how to approximately parallelize the Pivot algorithm using 
$O\left(\frac{\log^2 n}{\epsilon}\right)$ parallel rounds to obtain a $(3+\epsilon)-$approximation for the problem.
In a subsequent work, \citet{ahn2015correlation} present a nice result for the semi-streaming setting, which can be
adapted to provide a $3$-approximation by running
  $O\left(\log\log n\right)$ rounds and using $\tilde{O}(n)$ memory per machine.
	In another related work, \citet{pan2015parallel} propose a new algorithm that runs in 
  $O\left(\frac{\log n\log \Delta}{\epsilon}\right)$ rounds (where $\Delta$ is the maximum positive degree) and obtain
very nice experimental results. In a recent work, when the memory per machine is $o(n)$, \cite{cambus2021massively} show how to construct a $3$-approximate (in expectation) correlation clustering in graphs of arboricity $\lambda$ in $O(\log{\lambda} \cdot \poly(\log \log n))$ rounds. A natural important question has thus been: \emph{Is it possible to approximate unweighted
minimum disagreement in $o(\log n)$ many rounds with $o(n)$ memory per machine?}
In this paper we answer this question affirmatively. Moreover, we design an algorithm that requires only $O(1)$ rounds, thus improving on the existing approaches in regimes of both $\tilde{O}(n)$ and $o(n)$ memory per machine.
Next, we discuss the precise model of parallelism that we use in this work.

\textbf{The MPC model.} We design algorithms for the massively parallel computation (MPC) model, 
which is a theoretical abstraction of real-world parallel systems such as MapReduce \cite{dean2008mapreduce}, 
Hadoop \cite{white2012hadoop}, Spark \cite{zaharia2010spark} and Dryad \cite{isard2007dryad}. The MPC model~\cite{karloff2010model, goodrich2011sorting, beame2013communication} is widely used as the de-facto standard theoretical model 
for large-scale parallel computing.

In the MPC model, computation proceeds in synchronous parallel \emph{rounds} over multiple machines. Each machine has memory $S$.
At the beginning of the computation, data is arbitrarily partitioned across the machines. During each round, machines process data locally. 
At the end of a round, machines exchange messages, with the restriction that each machine is allowed to send messages and 
receive messages of total size $S$. The efficiency of an algorithm in this model is measured by the number of rounds it takes for 
the algorithm to terminate and by the size $S$ of the memory of every machine. In this paper we focus on the most practical and challenging
regime, also known as the \emph{sublinear} regime, where each machine has memory $S = O(n^\delta)$ where $\delta$ is an arbitrary constant
smaller than $1$.

\noindent\textbf{Our contribution.} Our main contribution is to present a constant-factor approximation algorithm
for the minimization problem
when the weights are in $\{-1,+1\}$. Our new algorithm runs using only a constant number of rounds in the sublinear regime.

\begin{restatable}{thr}{mpcmain}
\label{theorem:MPC-main}
	For any constant $\delta > 0$, there exists an MPC algorithm that, given a signed graph $G = (V, E^+)$, where $E^+$ denotes the set of edges with weight +1, in $O(1)$ rounds computes a $O(1)$-approximate correlation clustering. Letting $n = |V|$, this algorithm succeeds with probability at least $1-1/n$ and requires $O(n^{\delta})$ memory per machine. Moreover, the algorithm uses a total memory of $O(|E^+| \cdot \log{n})$.
\end{restatable}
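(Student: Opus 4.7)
The plan is to design a constant-round parallelization of the classical Pivot algorithm of Ailon, Charikar, and Newman. Recall that sequential Pivot iteratively picks a uniformly random unclustered vertex $v$, declares $v$ a pivot, creates a cluster with $v$ and all its unclustered positive neighbors, and recurses on what remains. Equivalently, sample a uniformly random permutation $\pi$ of $V$, call $v$ a pivot iff no positive neighbor $u$ with $\pi(u) < \pi(v)$ is itself a pivot, and assign every non-pivot to the smallest-ranked pivot in its positive neighborhood. This defines a $3$-approximation in expectation, but the pivot predicate is a recursive ``no earlier neighbor is a pivot'' rule whose natural evaluation depth is linear in $n$, which is precisely the obstacle that earlier parallel work has only partially circumvented.

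To obtain $O(1)$ MPC rounds, I would proceed as follows. First, preprocess the graph to bound worst-case degrees: vertices with positive degree above some threshold $n^{\delta'}$ (for a suitable $\delta' < \delta$) can be singled out and handled separately, their effect chargeable to $\OPT$ at constant-factor loss since any such high-degree vertex already carries many unavoidable disagreements. Second, draw the random ranks $\pi$ and sort vertices by rank in $O(1/\delta)$ rounds using Goodrich-style MPC sorting. Third, partition the rank axis into a constant number of blocks of geometrically growing sizes; process the blocks in order, and within each block use graph exponentiation along the sparsified positive edges to let every vertex in the current block see enough of its lower-rank neighborhood to decide pivot-status and cluster assignment locally. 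The $O(n^\delta)$ memory bound is respected because the degree reduction ensures that the exponentiated neighborhoods fit on a single machine.

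For the approximation analysis, I would use the standard triangle-charging framework. Every disagreement produced by the algorithm can be attributed to a \emph{bad triangle} — a triple $(u, v, w)$ with two positive edges and one negative edge — being split across clusters by the pivoting. A careful calculation shows that the expected cost incurred at each bad triangle is $O(1)$ times the probability that one of its vertices is selected to pivot first. Summing over all bad triangles and using the observation that every bad triangle necessarily contributes at least one disagreement to any clustering (and hence to $\OPT$) then yields the $O(1)$-approximation guarantee. Because our algorithm only differs from sequential Pivot in pre-specified ways (the degree sparsification and the block-based truncation), the ACN-style argument extends with only an additional $O(1)$ factor.

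The main obstacle will be arguing that truncating the cascade of pivot decisions to constant depth still yields a distribution close enough to sequential Pivot for the charging argument to work. In the sequential process the status of a low-rank vertex can propagate through long chains of positive edges, and a constant-round algorithm must cut these chains somewhere. The technical heart of the proof must show that this cutoff can be realized within the $O(n^\delta)$ memory bound (via the degree-reduction preprocessing), and that the residual perturbation of the pivot distribution only inflates the expected cost by a constant factor. I expect the delicate step to be quantifying precisely how much ``disturbance'' the sparsification introduces in the distribution of which vertex wins each bad triangle, and showing that this disturbance can be absorbed into the constant in front of $\OPT$.
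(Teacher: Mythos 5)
Your proposal pursues a Pivot-based parallelization, but the paper explicitly abandons that route: it states, ``In contrast to previous known parallel algorithms, our algorithm is not based on a parallel adaptation of the Pivot algorithm.'' The paper instead defines a notion of \emph{agreement} (two vertices $u,v$ are in agreement if $|N(u)\triangle N(v)| < \beta\max(d(u),d(v))$), deletes all edges whose endpoints disagree, marks vertices ``light'' if they lost a $\lambda$-fraction of neighbors, deletes all light--light edges, and then outputs connected components of the resulting sparsified graph $\tG$. The approximation guarantee comes from (i) a structural lemma that every connected component of $\tG$ has diameter at most $4$ and consists of vertices with large mutual neighborhood overlap, which makes each component cheaper to keep whole than to split, and (ii) two charging lemmas bounding the deleted ``+'' edges by $O(1/(\beta\lambda))\cdot\OPT$. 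The $O(1)$ diameter is what makes connected-components computation possible in $O(1)$ rounds; the agreement test itself is done by neighborhood sampling with $O((\log n)/\beta)$ samples per vertex and a Chernoff argument.

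Beyond being a different approach, your sketch has a gap that you yourself flag but do not close, and it is the whole difficulty. The ``no earlier-ranked neighbor is a pivot'' predicate creates dependency chains whose length is not controlled by local degree; even after your degree cap, a chain can traverse $\Theta(n)$ distinct vertices. Graph exponentiation in $O(n^\delta)$ memory only lets a vertex aggregate a ball of size $O(n^\delta)$, so a constant number of exponentiation rounds cannot resolve an $\Omega(n)$-length pivot chain. ``Partition the rank axis into a constant number of blocks'' does not escape this either: within a single block the same long-range dependency persists unless the block is small (which then requires $\omega(1)$ blocks), and across blocks you must propagate which vertices got clustered, again sequentially. This is exactly why prior Pivot parallelizations (\citealt{chierichetti2014correlation,pan2015parallel}) spend $\Theta(\log^2 n)$ or $\Theta(\log n \log\Delta)$ rounds, and no argument in your sketch improves that trade-off. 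Separately, your preprocessing claim that high-degree vertices ``already carry many unavoidable disagreements'' is false: a vertex of degree $n^{\delta'+\eps}$ whose positive neighborhood forms a clique carries zero disagreements in $\OPT$, so removing its edges cannot be charged to $\OPT$.
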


To the best of our knowledge, this is the first MPC graph clustering algorithm that runs in a constant number of rounds in the sublinear regime.
Furthermore, we also show that our algorithms extend to the semi-streaming setting. In particular, in this setting, our algorithm outputs an $O(1)$-approximate correlation clustering in only $O(1)$ passes over the stream. In terms of the number of passes, this significantly improves on the $3$-approximate algorithm by \cite{ahn2015correlation}, which requires $O(\log \log{n})$ passes.

\begin{theorem}\label{theorem:semi-streaming-main}
  There exists a semi-streaming algorithm that, given a signed graph $G = (V, E^+)$, where $E^+$ denotes the set of edges with weight +1,
   in $O(1)$ passes computes a $O(1)$-approximate correlation clustering.
  Letting $n = |V|$, this algorithm succeeds with probability at least $1-1/n$.
\end{theorem}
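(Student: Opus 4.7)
The plan is to derive Theorem~\ref{theorem:semi-streaming-main} as a direct corollary of Theorem~\ref{theorem:MPC-main} via a standard MPC-to-streaming simulation. Semi-streaming gives us $\widetilde{O}(n)$ memory on a single machine together with the ability to perform sequential passes over $E^+$. Since the MPC algorithm of Theorem~\ref{theorem:MPC-main} terminates in $O(1)$ rounds in the sublinear regime, it suffices to emulate one round of that algorithm using $O(1)$ passes over the stream; the approximation factor, success probability, and round count then transfer immediately.

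Concretely, I would first inspect the state that the MPC algorithm carries \emph{between} rounds and argue that it is of size $\widetilde{O}(n)$: per-vertex data such as the current (possibly partial) cluster assignment, activation/pivot status, sampled neighbor identifiers, degree estimates, and random tags. This state is maintained in the semi-streaming algorithm's working memory across passes. To simulate a single MPC round, I would make a pass over $E^+$ and, for each edge $\{u,v\}$ encountered, perform the local computation that the MPC machine holding $\{u,v\}$ would perform, using the per-vertex state as read-only input and accumulating updates into per-vertex aggregates (sums, indicators, selected samples, etc.). Because each MPC machine has memory $O(n^\delta)$ and total machine-to-machine communication in a round is bounded by the total memory, the aggregates produced per vertex fit into $\widetilde{O}(n)$ total; the end-of-round message exchange in MPC is replaced by simply writing these aggregates back into the streaming algorithm's memory as the new state.

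The potentially delicate step is whether any round of the MPC algorithm produces an intermediate object whose size exceeds $\widetilde{O}(n)$ (for instance, a subsampled subgraph of size close to $|E^+|$). In such a case the naive simulation does not directly work, and I would instead show that whatever this object is used for can be computed \emph{on the fly} during the next streaming pass: rather than materializing the subsampled edge set, re-examine each edge during the subsequent pass and include it in the computation if and only if it would have survived the MPC's subsampling, using the per-vertex random seeds stored in the state. This reduces the effective per-round memory requirement back to $\widetilde{O}(n)$ at the cost of a constant number of extra passes.

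Putting these pieces together, the constant-round MPC algorithm is simulated with $O(1)$ passes in the semi-streaming model, while preserving the $O(1)$-approximation guarantee and the $1-1/n$ success probability. The main obstacle, as noted above, will be verifying that every piece of inter-round information used by the algorithm of Theorem~\ref{theorem:MPC-main} can be either stored in $\widetilde{O}(n)$ bits or recomputed from scratch in a subsequent pass; once this is established, Theorem~\ref{theorem:semi-streaming-main} follows without additional probabilistic or approximation-theoretic work.
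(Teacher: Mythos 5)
Your high-level plan---simulate the constant-round MPC algorithm pass-by-pass while keeping only per-vertex state in memory---points in the right direction and is indeed in the spirit of what the paper does. But the argument as written has a real gap in the reasoning used to bound the state, and the ``delicate step'' you flag is not a side concern to be deferred: it \emph{is} the proof.

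Specifically, the claim that ``total machine-to-machine communication in a round is bounded by the total memory, [so] the aggregates produced per vertex fit into $\widetilde{O}(n)$ total'' is a non sequitur: the total memory of the MPC algorithm is $O(|E^+|\log n)$, not $\widetilde{O}(n)$, so nothing about MPC round structure forces the inter-round state to be near-linear in $n$. In the sublinear regime there is no generic MPC-to-semi-streaming reduction that preserves round count, precisely because inter-round state can be $\Theta(m)$. What makes the reduction go through here is a property of this particular algorithm, and your proposal never establishes it. Concretely, the surviving edge set $\widetilde{G}$ is a $\Theta(m)$-size object; the paper shows that membership of an edge $\{u,v\}$ in $\widetilde{G}$ can be \emph{recomputed} on the fly from $\widetilde{O}(1)$ bits per vertex (the stored samples $S(v,j_v)$, $S(v, j_v/(1-\beta))$, a degree, a light/heavy flag), using shared random seeds fixed up front. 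You gesture at exactly this idea (``re-examine each edge ... using the per-vertex random seeds stored in the state''), but you stop short of verifying that the per-vertex footprint of the algorithm really is $\widetilde{O}(1)$, that the agreement test is decidable from these samples alone, and that the light/heavy labels and the $4$ rounds of connected-component label propagation (which relies on the diameter-$4$ guarantee of Lemma~\ref{lemma:diameter}) each fit in one pass. Your list of per-vertex state (``activation/pivot status'') also suggests you are imagining a pivot-based algorithm, which this is not; the actual state to carry is samples, light/heavy marks, and component IDs. The paper's proof is the concrete verification of these facts; a proof cannot end at ``once this is established, the theorem follows.''
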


We  complement our theoretical results with an empirical analysis showing that our MPC algorithm is significantly faster than 
previously known algorithms~\cite{chierichetti2014correlation, pan2015parallel}. Furthermore, despite its theoretical
approximation 
guarantees being inferior to previous work, in our experiments the quality of the solution is better.
We explain this as follows: (1) all clusters returned by our algorithm are guaranteed to be very dense,
as opposed to the pivot-based algorithms, where formed clusters might be sparse, and (2) the 
similarities between our algorithm and existing heuristics for clustering that are known to work very
well in practice~\cite{xu2007scan}. 

\noindent\textbf{Techniques and Roadmap.} In contrast to previous known parallel
algorithms~\cite{chierichetti2014correlation,ahn2015correlation, pan2015parallel}, our algorithm is not based on a
parallel adaptation of the Pivot algorithm. Instead, we
study structural properties of correlation clustering. We show that, up to losing a constant factor in the quality
of the solution, one can simply focus on clusters consisting of points whose neighborhoods are almost identical
(up to a small multiplicative factor); we call such points ``in agreement'' and focus on clusters of such points. The next key idea is to trim the input graph so as to
only keep edges between some specific points in agreement (intuitively, the
points that are in agreement with many of their neighbors), so that clusters of points in agreement correspond to
connected components in this trimmed graph. We show how all the above operations can be performed in few rounds.
Finally, it remains to simply compute the connected components of the trimmed graph to obtain the final clusters.
Here we prove an important feature of the trimmed graph: Each connected component has constant diameter. This
ensures that this last step can indeed be performed in few parallel rounds.

In \cref{sec:algorithm} we present our algorithm, then in \cref{section:analysis} we present its analysis and in \cref{section:implementation} its MPC implementation. In \cref{section:streaming-implementation} we show how to extend these ideas to the semi-streaming setting.
Finally we present our experimental results in \cref{section:experiments}.

\section{Preliminaries}
\label{sec:preliminaries}
In this paper, we study the \emph{min-disagree} variant of correlation clustering in
the ``complete graph case'', where we are given a complete graph and
each edge is labeled with either $-1$ or $+1$: $E^+$ and $E^-$ respectively denote the set of
 edges labeled $+1$ and $-1$. The goal is to find a partition $C_1, \ldots, C_t$ of the vertices of the graph that minimizes the following objective:
\[
	f(C_1, \ldots, C_t) = \sum_{\substack{\{u, v\} \in E^+ : \\ u \in C_i, v \in C_j, i \neq j}} 1 + \sum_{\substack{\{u, v\} \in E^- : \\ u, v \in C_i}} 1.
\]

\paragraph{Notation.}
In our analysis and discussion, instead of working with a signed graph $(V, E^+, E^-)$, we work with an unweighted undirected graph $G = (V, E)$, where $E$ refers to $E^+$. Therefore, by this convention, $E^- = {V \choose 2} \setminus E$. In addition, when we say that two vertices $u$ and $v$ are neighbors, we mean that $\{u, v\} \in E = E^+$.

We use some standard notation that we briefly recall here. For a vertex $v\in V$, we refer to its neighborhood by $N(v)$ and to its degree by $d(v)$; we further let $N(v, H)$ denote the  neighborhood of $v$ in a subgraph $H$ of $G$. We also consider the degree of a vertex $v$ induced on an arbitrary subset $S\subseteq V$ of nodes and we denote it by $d(v, S)$. We also refer to the hop-distance between two vertices $u,v\in V$ by $\dist^G(u,v)$. We consider the hop distance also in subgraphs $\tG$ of $G$, in which case we denote the distance by $\dist^{\tG}(u,v)$. Finally for any two sets $R, S$, we denote their symmetric difference by $R \triangle S$.

\begin{remark}
    We assume that each vertex has a self-loop ``+'' edge. Note that this does not affect the cost of clustering, as a self-loop is never cut by a clustering. Note that this assumption implies that $v \in N(v)$.
\end{remark}

\section{Algorithm}\label{sec:algorithm}
The starting point of our approach is the notion of \emph{agreement} between vertices. Informally, we say that $u$ and $v$ are in agreement when their neighborhoods significantly overlap. Intuitively, in such scenario, we expect $u$ and $v$ to be treated equally by an algorithm: either $u$ and $v$ are in the same cluster, or both of them form singleton clusters.

Our algorithms are parametrized by two constants $\beta$, $\lambda$ that will be determined later.
\begin{definition}[Weak Agreement]\label{definition:agreement}
    Two vertices $u$ and $v$ are in \emph{$i$-weak agreement} if $|N(u) \triangle N(v)| < i \beta \cdot \max\{|N(u)|, |N(v)|\}$. If $u$ and $v$ are in $1$-weak agreement, we also say that $u$ and $v$ are in \emph{agreement}.
\end{definition}
Having the agreement notion in hand, we provide our approach in \cref{alg:main}.
\begin{algorithm}[h]
\begin{algorithmic}[1]
\caption{Correlation-Clustering($G$) \label{alg:main}}
	\STATE {Discard all edges whose endpoints are not in agreement. (First compute the set of these edges. Then remove this set.) \alglinelabel{step:not-in-agreement}}
	
	\STATE Call a vertex \emph{light} if it has lost more than a $\lambda$-fraction of its neighbors in the previous step. Otherwise call it \emph{heavy}. \alglinelabel{line:heavy-and-light-vertices}
	
	\STATE {Discard all edges between two light vertices. \alglinelabel{step:discard-light-light}}
	
	\STATE Call the current graph $\tG$, or the \emph{sparsified graph}. Compute its connected components, and output them as the solution. \alglinelabel{step:connected-components-in-tG}
\end{algorithmic}
\end{algorithm}

\subsection{Analysis}
\label{section:analysis}
Our analysis consists of two main parts. The first part consists of analyzing properties of $\tG$ and, in particular, showing that each connected component of $\tG$ has $O(1)$ diameter, with all vertices being in $O(1)$-weak agreement. The second part shows that the number of edges removed in Lines~\ref{step:not-in-agreement}-\ref{step:discard-light-light} of \cref{alg:main} is only a constant factor larger than the cost an optimal solution.

\subsubsection{Properties of $\tG$}
Our analysis hinges on several properties of vertices being in weak agreement. We start by stating those properties.
\begin{fact}\label{main:fact}
Suppose that $\beta < \frac{1}{20}$.
\begin{enumerate}[(1)]
	\item \label{fact:degree-bounds}     If $u$ and $v$ are in $i$-weak agreement, for some $1 \le i < \frac{1}{\beta}$, then
    \[
        (1 - \beta i) d(u) \le d(v) \le \frac{d(u)}{1 - i\beta}.
    \]
	\item \label{fact:triangle_inequality}
Let $k \in \{2,3,4,5\}$ and $v_1, \ldots, v_k \in V$ be a sequence of vertices such that $v_i$ is in agreement with $v_{i+1}$ for $i=1, \ldots,k-1$. Then $v_1$ and $v_k$ are in $k$-weak agreement.
	\item \label{fact:intersection-bound}
    If $u$ and $v$ are in $i$-weak agreement, for some $1 \le i < \frac{1}{\beta}$, then
    $|N(v) \cap N(u)| \ge (1 - i \beta) d(v)$.
\end{enumerate}
\end{fact}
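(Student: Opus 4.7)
The plan is to derive all three parts directly from the definition of weak agreement via elementary set manipulations, with parts (1) and (3) reducing to short size calculations and part (2) requiring an iterated triangle-inequality-style argument along the agreement chain.

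For (1), I would assume without loss of generality that $d(u) \ge d(v)$ and observe that $|N(u) \setminus N(v)| \ge d(u) - d(v)$, simply because $|N(u) \cap N(v)| \le d(v)$. Combined with $|N(u) \setminus N(v)| \le |N(u) \triangle N(v)| < i \beta \, d(u)$ from the weak-agreement hypothesis, this rearranges to $(1 - i \beta) d(u) \le d(v) \le d(u)$, which already implies both claimed inequalities. The case $d(v) \ge d(u)$ is handled symmetrically. For (3), I would let $w \in \{u, v\}$ be the vertex with the larger degree and write
\[
|N(u) \cap N(v)| \ge d(w) - |N(u) \triangle N(v)| > (1 - i\beta) d(w) \ge (1 - i\beta) d(v),
\]
where the first inequality follows from $|N(u) \cap N(v)| \ge |N(w)| - |N(u) \triangle N(v)|$ and the second from the weak-agreement hypothesis applied with $\max\{d(u), d(v)\} = d(w)$.

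For (2), the approach is to iterate the triangle inequality $|A \triangle C| \le |A \triangle B| + |B \triangle C|$ along the chain $v_1, \ldots, v_k$. Equivalently, any vertex $w \in N(v_1) \triangle N(v_k)$ must switch adjacency status between some consecutive pair $v_i, v_{i+1}$, so
\[
|N(v_1) \triangle N(v_k)| \le \sum_{i=1}^{k-1} |N(v_i) \triangle N(v_{i+1})| < \beta \sum_{i=1}^{k-1} \max\{d(v_i), d(v_{i+1})\}.
\]

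The main obstacle will be upper-bounding this last sum by $k \beta M$, where $M = \max\{d(v_1), d(v_k)\}$, because intermediate vertices of the chain can \emph{a priori} have degrees exceeding $M$. To control them I would iterate part (1) along the chain to derive the pointwise bound $d(v_j) \le M / (1 - \beta)^{\min(j-1, \, k-j)}$, and then verify the desired bound by a short casework on $k \in \{2, 3, 4, 5\}$. The assumption $\beta < 1/20$ keeps the worst-case blow-up factor $(1-\beta)^{-2} \le (20/19)^2 < 1.12$ small enough to absorb the inefficiency of the chain bound, leaving strict slack below $k\beta M$; in particular the critical case $k = 5$ reduces to checking that $2/(1-\beta) + 2/(1-\beta)^2 < 5$, which holds comfortably for $\beta < 1/20$.
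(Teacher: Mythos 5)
Your proof is correct and follows essentially the same approach as the paper: parts (1) and (3) reduce directly to the definition via elementary set inequalities, and part (2) is the iterated triangle inequality for symmetric differences with intermediate degrees controlled by part (1). The only small variation is in part (2), where you bound intermediate degrees symmetrically from both ends of the chain and then do casework over $k$, whereas the paper uses the single uniform one-sided bound $d(v_i) \le d(v_k)/(1-\beta)^4 \le \tfrac{k}{k-1}\,d(v_k)$, valid for all $i$ and all $k \le 5$ at once.
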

\begin{proof}
\begin{enumerate}[(1)]
\item
    Without loss of generality, assume that $d(u) \le d(v)$.
    We have $|N(u) \triangle N(v)| \ge d(v) - d(u)$. Then, by \cref{definition:agreement}, 
   $d(v) - d(u) \le |N(u) \triangle N(v)| \le i \beta \cdot d(v)$.
    This now implies $d(u) \ge (1 - i \beta) d(v)$, as desired.

\item
	For $i=1,...,k-1$ we have by \eqref{fact:degree-bounds}:
	\[
	d(v_i) \le \frac{d(v_{i+1})}{1 - \beta} \le ... \le \frac{d(v_k)}{(1 - \beta)^{k-i}} \le \frac{d(v_k)}{(1 - \beta)^4} \le \frac{k}{k-1} \cdot d(v_k) \,, \]
	since $(1 - \beta)^4 \ge (1 - \frac{1}{20})^4 > \frac{4}{5} \ge \frac{k-1}{k}$.
	Now we iterate the triangle inequality:
	\begin{align*}
	|N(v_1) \triangle N(v_k)| &\le \sum_{i=1}^{k-1} |N(v_i) \triangle N(v_{i+1})| \\
	&< \sum_{i=1}^{k-1} \beta \cdot \max(d(v_i), d(v_{i+1})) \\
	&\le (k-1) \cdot \beta \cdot \frac{k}{k-1} \cdot d(v_k) \\
	&\le k \cdot \beta \cdot \max(d(v_1), d(v_k)) \,.
	\end{align*}

\item
    Without loss of generality, assume that $d(u) \le d(v)$.
    Then
    \begin{align*}
			|N(u) \cap N(v)| & = |N(v)| - |N(v) \setminus N(u)| \\
			& \ge |N(v)| - |N(u) \triangle N(v)| \\
			& \ge (1 - i \beta) d(v). 
    \end{align*}
\end{enumerate}
\end{proof}

By building on these claims, we are able to show that $\tG$ has a very convenient structure: each of its connected components has diameter of only at most $4$; and every two vertices (one of them being heavy) in a connected component of $\tG$ are in $4$-weak agreement. More formally, we have:

\begin{restatable}{lma}{diameter}
\label{lemma:diameter}
	Suppose that $5 \beta + 2 \lambda < 1$.
    Let $CC$ be a connected component of $\tG$.
    Then, for every $u,v \in CC$:
		\vspace{-10pt}
    \begin{enumerate}[(a)]
        \item if $u$ and $v$ are heavy, then $\dist^{\tG}(u,v) \le 2$,
				\vspace{-5pt}
        \item $\dist^{\tG}(u,v) \le 4$,
				\vspace{-5pt}
        \item $\dist^G(u,v) \le 2$,
				\vspace{-5pt}
        \item\label{item:2-weak-agreement} if $u$ or $v$ is heavy, then $u$ and $v$ are in $4$-weak agreement.
				\vspace{-5pt}
    \end{enumerate}
\end{restatable}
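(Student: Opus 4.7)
The plan is to prove part~(a) first; parts~(b)--(d) then follow cheaply from (a) together with \cref{main:fact}. The workhorse for (a) is one routine counting argument: if $u$ and $v$ are both heavy and in $c$-weak agreement for some $c$ with $c\beta + 2\lambda < 1$, then $N_{\tG}(u) \cap N_{\tG}(v) \ne \emptyset$, and hence $\dist^{\tG}(u,v) \le 2$. Indeed, assuming WLOG $d(u) \le d(v)$, part~(\ref{fact:intersection-bound}) of \cref{main:fact} gives $|N(u) \cap N(v)| \ge (1 - c\beta)\,d(v)$; being heavy means each of $u, v$ loses at most a $\lambda$-fraction of its $G$-neighbors in line~\ref{step:not-in-agreement} (and loses none in line~\ref{step:discard-light-light}, since any edge incident to a heavy vertex has a heavy endpoint); subtracting the two losses yields $|N_{\tG}(u) \cap N_{\tG}(v)| \ge (1 - c\beta - 2\lambda)\,d(v) > 0$. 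The hypothesis $5\beta + 2\lambda < 1$ makes this applicable whenever $c \le 5$.

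The main obstacle is part~(a), which I would settle by contradiction. Suppose heavy $u$ and $v$ lie in the same $\tG$-component but their shortest $\tG$-path $u = w_0, w_1, \ldots, w_k = v$ has $k \ge 3$. The limiting constraint is part~(\ref{fact:triangle_inequality}) of \cref{main:fact}, which converts chains of $1$-weak agreements into weak-agreement bounds only for paths of up to $5$ vertices; so I need to locate a heavy vertex $w_j$ on the path with index $j \in \{3, 4\}$ (or $j = k$ when $k \le 4$). For $k \in \{3, 4\}$ the far endpoint $v$ qualifies: the $(k{+}1)$-vertex path places $u, v$ in $(k{+}1)$-weak agreement, and the common-neighbor count forces $\dist^{\tG}(u,v) \le 2$, contradicting $k \ge 3$. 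For $k \ge 5$, I would use the structural observation that no two consecutive vertices of the path are both light (their edge would have been deleted in line~\ref{step:discard-light-light}), so at least one of $w_3, w_4$---call it $w_j$---is heavy. The path $u, w_1, \ldots, w_j$ then has at most $5$ vertices, yielding at most $5$-weak agreement between $u$ and $w_j$; the common-neighbor count gives $\dist^{\tG}(u, w_j) \le 2$, and concatenating with $w_j, \ldots, w_k$ produces a $u$-to-$v$ walk of length $\le 2 + (k - j) < k$, contradicting shortestness.

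Parts~(b)--(d) follow with a single extra hop. For (b), any light vertex inside a nontrivial $\tG$-component has a $\tG$-neighbor, and that neighbor must be heavy (light--light edges are gone); routing each of $u, v$ to a heavy vertex in one hop when needed and then invoking (a) yields $\dist^{\tG}(u, v) \le 1 + 2 + 1 = 4$. For (c), (b) supplies a $\tG$-path on at most $5$ vertices, part~(\ref{fact:triangle_inequality}) of \cref{main:fact} puts $u$ and $v$ in $5$-weak agreement, and part~(\ref{fact:intersection-bound}) (using $5\beta < 1$) produces a common $G$-neighbor, giving $\dist^G(u, v) \le 2$. For (d), the same one-hop routing combined with (a) sharpens the bound to $\dist^{\tG}(u, v) \le 3$ whenever one of $u, v$ is heavy, and part~(\ref{fact:triangle_inequality}) applied to the resulting $4$-vertex path gives $4$-weak agreement.
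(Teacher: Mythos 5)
Your proof is correct and follows essentially the same approach as the paper: the same workhorse computation (weak agreement $\Rightarrow$ many common $G$-neighbors, heavy vertices retain all but a $\lambda$-fraction, hence a common $\tG$-neighbor) drives part~(a), and parts (b)--(d) are derived from (a) by the same one-hop routing to heavy vertices plus \cref{main:fact}~\eqref{fact:triangle_inequality}. The only cosmetic difference is in the $k\ge 5$ case of~(a): the paper picks a minimum-distance heavy pair and reduces to a closer one, while you shortcut a shortest path by applying the common-neighbor argument to $u$ and a heavy vertex $w_j$ with $j\in\{3,4\}$ --- both are valid instances of the same idea.
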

\begin{proof}
    For (a),    suppose by contradiction that there are heavy $u,v \in CC$ with $\dist^{\tG}(u,v) > 2$; pick such $u,v$ with minimum $\dist^{\tG}(u,v)$.
    If $\dist^{\tG}(u,v) \ge 5$, let $P = \ab{u,u',u'',...,v}$ be a shortest $u$-$v$ path in $\tG$; since there are no edges in $\tG$ with both endpoints being light, either $u'$ or $u''$ must be heavy, and the pair $(u',v)$ or $(u'',v)$ contradicts the minimality of the path $(u,v)$ (as we have $\dist^{\tG}(u'',v) > 2$).

    On the other hand, if $\dist^{\tG}(u,v) \le 4$,
    then by \cref{main:fact}~\eqref{fact:triangle_inequality} $u$ and $v$ are in $5$-weak agreement,
    and by \cref{main:fact}~\eqref{fact:intersection-bound} we have $|N(u) \cap N(v)| \ge (1-5\beta)d(v)$.
    Note that a heavy vertex can lose at most a $\lambda$ fraction of its neighbors in $G$ in Line~\ref{step:not-in-agreement} of the algorithm, and it loses no neighbors in Line~\ref{step:discard-light-light}; thus $|N(v) \setminus N^{\tG}(v)| \le \lambda d(v)$ and similarly for $u$.
    Assume without loss of generality that $d(v) \ge d(u)$.
    Then we have
    \[
    |N^{\tG}(u) \cap N^{\tG}(v)| \ge |N(u) \cap N(v)| - |N(u) \setminus N^{\tG}(u)| - |N(v) \setminus N^{\tG}(v)| \ge (1-5 \beta - 2 \lambda) d(v) > 0 \,,
    \]
    i.e., $u$ and $v$ have a common neighbor in $\tG$, and thus, $\dist^{\tG}(u,v) \le 2$.
    
    For (b), let $P$ be a shortest $u$-$v$ path in $\tG$.
    Define the vertex $u'$ to be $u$ if $u$ is heavy and to be $u$'s neighbor on $P$ if $u$ is light; in the latter case, $u'$ is heavy since there are no edges in $\tG$ with both endpoints being light.
    Define $v'$ similarly.
    Since $u'$ and $v'$ are heavy, we have $\dist^{\tG}(u,v) \le 1 + \dist^{\tG}(u',v') + 1 \le 4$.
    
    For (c), note that by (b) and \cref{main:fact}~\eqref{fact:triangle_inequality}, $u$ and $v$ are in $5$-weak agreement; by \cref{main:fact}~\eqref{fact:intersection-bound}, they have at least $(1-5\beta)d(v) > 0$ common neighbors in $G$.
    
    To prove (d), we proceed similarly as for (b). We consider two cases: both $u$ and $v$ are heavy; only one $u$ or $v$ is heavy. In the first case, by (a) and \cref{main:fact}~\eqref{fact:triangle_inequality} we even have that $u$ and $v$ are in $3$-weak agreement. In the second case, one of the vertices is light; without loss of generality, assume $u$ is light. In that case, $u$ is adjacent to a heavy vertex $u'$, as there are no edges between light vertices. Since by (a) $v$ and $u'$ are at distance $2$, it implies that $v$ and $u$ are at distance $3$. Since each edge $(x,y)$ in $CC$ means that $x$ and $y$ are in agreement, by \cref{main:fact}~\eqref{fact:triangle_inequality} we have that $v$ and $u$ are in $4$-weak agreement.
\end{proof}

We now illustrate how to apply \cref{lemma:diameter} to show further helpful properties of connected components of $\tG$. First, observe that a non-trivial connected component $CC$ of $\tG$ (i.e., one consisting of at least two vertices) has at least one heavy vertex. (As a reminder, heavy vertices are defined on Line~\ref{line:heavy-and-light-vertices} of \cref{alg:main}.) Indeed, any edge in $\tG$ has at least one heavy endpoint, as assured by Line~\ref{step:discard-light-light} of \cref{alg:main}. Let $x$ be such a heavy vertex. Then, by Property~\eqref{item:2-weak-agreement} of \cref{lemma:diameter} we have that \emph{every} other vertex in $CC$ shares a large number of neighbors with $x$. One can turn this property into a claim stating that all vertices in $CC$ have induced degree inside $CC$  very close to $|CC|$. Formally:

\begin{restatable}{lma}{clusterdegreelowebound}\label{lemma:lower-bound-in-cluster-degree}
    Let $CC$ be a connected component of $\tG$ such that $|CC| \ge 2$. Then, for each vertex $u \in CC$ we have that
    \[
        d(u, CC) \ge (1 - 8 \beta - \lambda) |CC|.
    \]
\end{restatable}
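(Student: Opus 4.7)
The plan is to pick a heavy vertex $x^* \in CC$, which exists because $|CC| \ge 2$ forces $CC$ to contain a $\tG$-edge and every such edge has at least one heavy endpoint (Line~\ref{step:discard-light-light}). By \cref{lemma:diameter}~\eqref{item:2-weak-agreement} every $w \in CC$ is then in $4$-weak agreement with $x^*$, so \cref{main:fact}~\eqref{fact:degree-bounds} and~\eqref{fact:intersection-bound} pin every degree in $CC$ to the range $[(1-4\beta) d(x^*),\, d(x^*)/(1-4\beta)]$ and yield $|N(w) \cap N(x^*)| \ge (1-4\beta)\max(d(w), d(x^*))$ for all $w \in CC$.

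The cleanest route is to handle $u$ heavy by using $u$ itself as the pivot. Since $u$ is heavy, $|N(u) \setminus CC| \le \lambda d(u)$: Line~\ref{step:not-in-agreement} removes at most a $\lambda$-fraction of $u$'s edges, Line~\ref{step:discard-light-light} removes none, and the remaining $\tG$-edges of $u$ stay in $CC$. I would then double-count pairs $(w,y)$ with $w \in CC$ and $y \in N(w) \cap N(u)$: the $w$-side gives a lower bound of $|CC|\,(1-4\beta)\,d(u)$ via \cref{main:fact}~\eqref{fact:intersection-bound}, while on the $y$-side I split $N(u)$ into $N(u) \cap CC$ (where $d(y) \le d(u)/(1-4\beta)$ by \cref{main:fact}~\eqref{fact:degree-bounds}, applied to the $4$-weakly agreeing pair $u,y$) and $N(u) \setminus CC$ (of size at most $\lambda d(u)$, using the trivial $d(y, CC) \le |CC|$). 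Rearranging
\[
(1-4\beta)\, d(u)\, |CC| \;\le\; |N(u) \cap CC| \cdot \tfrac{d(u)}{1-4\beta} + \lambda\, d(u)\, |CC|
\]
gives $|N(u) \cap CC| \ge (1-4\beta)(1-4\beta-\lambda)\,|CC| \ge (1-8\beta-\lambda)\,|CC|$, since $(1-4\beta)(1-4\beta-\lambda) - (1-8\beta-\lambda) = 16\beta^2 + 4\beta\lambda \ge 0$.

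When $u$ is light, it must still have at least one $\tG$-edge (else $u$ would be alone in its component), and by Line~\ref{step:discard-light-light} the other endpoint $u'$ is heavy, so $u, u'$ are in $1$-weak agreement. Applying the heavy case to $u'$ gives $|N(u') \cap CC| \ge (1-8\beta-\lambda)\,|CC|$, and one transfers this to $u$ using $|N(u) \cap CC| \ge |N(u') \cap CC| - |N(u') \setminus N(u)|$, bounding the loss by $\beta \max(d(u), d(u'))$ and further by a small $O(\beta)\,|CC|$ term via $d(u') \le |CC|/(1-\lambda)$ (all $(1-\lambda)\,d(u')$ of the heavy $u'$'s $\tG$-neighbors sit in $CC$). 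The main obstacle is the constants bookkeeping in this light-case transfer: the naive calculation loses an additional $\beta$ factor, so squeezing out exactly the stated $(1-8\beta-\lambda)$ in the light case may require a slightly more refined pivoting argument or absorbing lower-order terms using the global hypotheses on $\beta, \lambda$ from \cref{lemma:diameter}.
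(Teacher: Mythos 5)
Your heavy case is correct, and it is a genuinely different argument from the paper's. You double-count pairs $(w,y)$ with $w\in CC$, $y\in N(w)\cap N(u)$, using $u$ itself as the pivot; this cleanly gives $|N(u)\cap CC| \ge (1-4\beta)(1-4\beta-\lambda)|CC| \ge (1-8\beta-\lambda)|CC|$. The paper never splits into light and heavy cases. It fixes one heavy pivot $x\in CC$, shows $|N(u,CC)| \ge (1-4\beta-\lambda)\,d(x)$ for \emph{every} $u\in CC$ (since $u$ and $x$ are in $4$-weak agreement by \cref{lemma:diameter}~\eqref{item:2-weak-agreement} and $x$ loses at most a $\lambda$-fraction of its neighborhood), then separately upper-bounds $|CC|\le \frac{1-4\beta-\lambda}{1-8\beta-\lambda}\,d(x)$ by counting edges between $N(x,CC)$ and $CC\setminus N(x)$, and divides. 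Because both inequalities are stated against the same $d(x)$, the ratio gives the claim for all $u$ at once, with no dependence on whether $u$ is light.

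Your light case, however, has a genuine gap, and your suspicion about the constants is justified. The transfer from the heavy neighbor $u'$ costs you $|N(u')\setminus N(u)| \le \beta\max(d(u),d(u'))$, and the best you can prove is $\max(d(u),d(u'))\le |CC|/\bigl((1-\beta)(1-\lambda)\bigr)$, so the loss is $\frac{\beta}{(1-\beta)(1-\lambda)}|CC| = \Theta(\beta)\cdot|CC|$. But the slack that your heavy case leaves above the target $(1-8\beta-\lambda)$ is only
\[
(1-4\beta)(1-4\beta-\lambda) - (1-8\beta-\lambda) \;=\; 16\beta^2 + 4\beta\lambda \;=\; \Theta(\beta^2+\beta\lambda),
\]
which is an order of magnitude smaller when $\beta$ and $\lambda$ are small. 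Concretely, at $\beta=\lambda=1/36$ (the paper's choice) the slack is $\approx 0.015$ while the transfer loss is $\approx 0.029$, so the light case cannot be closed by absorbing lower-order terms. This is not a bookkeeping issue that a global constraint on $\beta,\lambda$ rescues: as $\beta,\lambda\to 0$ the loss always dominates the slack. The fix is not a sharper transfer but to abandon the case split entirely, lower-bounding $d(u,CC)$ directly against $d(x)$ for the fixed heavy pivot, as the paper does.
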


(Note that $d(u, CC)$ in \cref{lemma:lower-bound-in-cluster-degree} is defined with respect to the edges appearing in $G$.)

\begin{proof}
    Assume that $CC$ is a non-trivial connected component, i.e., $CC$ has at least two vertices.
    Let $x$ be a heavy vertex in $CC$. Observe that such a vertex $x$ always exists by the construction of our algorithm -- edges having both light endpoints are removed in Line~\ref{step:discard-light-light} of \cref{alg:main}.
    
    \emph{Remark:} While $CC$ refers to a connected component in the sparsified graph $\tG$, note that $N(\cdot)$ and $d(\cdot)$ refer to neighborhood and degree functions with respect to the input graph $G$ rather than with respect to $\tG$.

    First, from \cref{lemma:diameter}~\eqref{item:2-weak-agreement}, we have that any two vertices in $CC$, one of which is heavy, are in $4$-weak agreement. In particular, this also holds for $x$ and any other vertex $u\in CC$.
    As defined in \cref{sec:preliminaries}, recall that $N(x, CC) \eqdef N(x) \cap CC$. Since $x$ is a heavy vertex, it has at most a $\lambda$-fraction of its neighbors $N(x)$ outside $CC$, and so from \cref{main:fact}~\eqref{fact:intersection-bound} we have
    \begin{equation}\label{eq:lower-bound-intersection-in-CC}
        |N(x, CC) \cap N(u)| \ge (1 - 4 \beta) d(x) - \lambda d(x) = (1 - 4 \beta - \lambda) d(x).
    \end{equation}
    Observe that this also implies
    \begin{equation}\label{eq:-lower-bound-in-CC-degree}
        |N(u, CC)| \ge (1 - 4 \beta - \lambda) d(x).
    \end{equation}
    
    Next, we want to upper-bound the number of vertices in $CC \setminus N(x)$, which will enable us to express $|CC|$ as a function of $d(x)$. To that end, note that 
    \cref{eq:lower-bound-intersection-in-CC} implies a lower bound on the number of edges between the neighbors of $x$ in $CC$, denoted by $N(x, CC)$, and the vertices in $CC$ other than $N(x)$, denoted by $CC \setminus N(x)$, as follows:
    \begin{equation}\label{eq:lower-bound-edges}
        |E(N(x, CC), CC \setminus N(x))| \ge |CC \setminus N(x)| \cdot (1 - 4 \beta - \lambda) d(x),
    \end{equation}
    where $E(Y, Z)$ is the set of edges between sets $Y$ and $Z$.
    On the other hand, since $d(u) \le \frac{d(x)}{1 - 4 \beta}$ for each $u \in CC$ by \cref{main:fact}~\eqref{fact:degree-bounds} and since $u$ and $x$ are in $4$-weak agreement, we have that $u$ has at most $4 \beta \frac{d(x)}{1 - 4 \beta}$ neighbors outside $N(x)$. Hence, we derive
    \[
        |E(N(x, CC), CC \setminus N(x))| \le |N(x, CC)| \cdot \frac{4 \beta d(x)}{1 - 4 \beta} \le d(x) \cdot \frac{4 \beta d(x)}{1 - 4 \beta}.
    \]
    Combining the last inequality with \cref{eq:lower-bound-edges} yields
    \[
        |CC \setminus N(x))| \le \frac{4 \beta d(x)}{(1 - 4 \beta) \cdot (1 - 4 \beta - \lambda)} \le \frac{4 \beta d(x)}{1 - 8 \beta - \lambda},
    \]
    which further implies
    \[
        |CC| = |CC \setminus N(x)| + |N(x, CC)| \le \rb{1 + \frac{4 \beta }{1 - 8 \beta - \lambda}} d(x) = \frac{1 - 4 \beta - \lambda}{1 - 8 \beta - \lambda} d(x).
    \]
    Now together with \cref{eq:-lower-bound-in-CC-degree}, we establish
    \[
        |N(u, CC)| \ge (1 - 8 \beta - \lambda) |CC|,
    \]
    as desired.
\end{proof}

Building on \cref{lemma:lower-bound-in-cluster-degree} we can now show that it is not beneficial to split a connected component into smaller clusters. Intuitively, this is the case as each vertex in a connected component $CC$ has degree almost $|CC|$, while splitting $CC$ into at least two clusters would force the smallest cluster (that has size at most $|CC|/2$) to cut too many ``+'' edges, while in $CC$ it has relatively few ``-'' edges.

\begin{restatable}{lma}{keepconnectedcomponent}\label{lemma:keep-connected-components}
    Let $CC$ be a connected component in $\tG$. Assume that $8 \beta + \lambda \le 1/4$. Then, the cost of keeping $CC$ as a cluster in $G$ is no larger than the cost of splitting $CC$ into two or more clusters.
\end{restatable}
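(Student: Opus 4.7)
Let $m = |CC|$. By \cref{lemma:lower-bound-in-cluster-degree} combined with the hypothesis $8\beta + \lambda \le 1/4$, every $u \in CC$ satisfies $d(u, CC) \ge \tfrac{3}{4} m$. This is the only property of $CC$ I will use; the rest of the argument is purely combinatorial.

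First I would reduce the comparison of costs to a comparison of cut edges. Let $P = \{C_1, \ldots, C_k\}$ be any partition of $CC$ into $k \ge 2$ parts, and let $p$, $n$ denote the number of positive, respectively negative, edges of $G$ with endpoints in distinct parts of $P$. Writing out the min-disagree objective and cancelling the negative edges internal to $CC$, one checks directly that
\[
  \cost(P) - \cost(\{CC\}) = p - n,
\]
so it suffices to prove $p \ge n$. By double counting each cut edge from both sides,
\[
  2(p - n) = \sum_{i=1}^{k} \bigl[\, p(C_i, CC \setminus C_i) - n(C_i, CC \setminus C_i) \,\bigr],
\]
where $p(S,T)$ (resp.\ $n(S,T)$) denotes the number of positive (resp.\ negative) edges between disjoint sets $S,T$. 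Since $p(C_i, CC \setminus C_i) + n(C_i, CC \setminus C_i) = |C_i|\,(m - |C_i|)$, it is enough to show, for every $i$,
\[
  p(C_i, CC \setminus C_i) \;\ge\; \tfrac{1}{2}\, |C_i|\,(m - |C_i|). \qquad (\star)
\]

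To prove $(\star)$ I would split on whether $|C_i| \le m/2$ or $|C_i| \ge m/2$. If $|C_i| \le m/2$, summing the bound $d(u, CC \setminus C_i) \ge \tfrac{3}{4}m - |C_i|$ over $u \in C_i$ gives $p(C_i, CC \setminus C_i) \ge |C_i|\bigl(\tfrac{3}{4}m - |C_i|\bigr)$, and a one-line calculation shows this is $\ge \tfrac{1}{2}|C_i|(m - |C_i|)$ exactly when $|C_i| \le m/2$. If instead $|C_i| \ge m/2$, I would apply the estimate from the opposite side: for each $u \in CC \setminus C_i$, $d(u, C_i) = d(u, CC) - d(u, CC \setminus C_i) \ge \tfrac{3}{4}m - (m - |C_i|) = |C_i| - \tfrac{m}{4}$, and summing over $u \in CC \setminus C_i$ yields $p(C_i, CC \setminus C_i) \ge (m - |C_i|)\bigl(|C_i| - \tfrac{m}{4}\bigr)$, which in turn is $\ge \tfrac{1}{2}|C_i|(m - |C_i|)$ precisely when $|C_i| \ge m/2$.

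The main subtlety is that the naive one-sided estimate $p(C_i, CC \setminus C_i) \ge |C_i|\bigl(\tfrac{3}{4}m - |C_i|\bigr)$ becomes vacuous (indeed negative) when some cluster $C_i$ is very large, so a single bound is not enough; the symmetric estimate obtained by summing degrees from $CC \setminus C_i$ into $C_i$ is essential to handle the unbalanced case. Once this symmetry is noticed, the remaining calculations are elementary.
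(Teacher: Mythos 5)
Your proof is correct, and it takes a genuinely different route from the paper's. The paper argues by contradiction against the cheapest split of $CC$ into $k\ge 2$ parts, with two global cases: if every part has size at most $(1-2\delta)|CC|$ it compares the total number of cut positive edges against the total number of negative pairs inside $CC$; otherwise it exhibits a dominant part $\Cstar$ and shows that merging any other part into $\Cstar$ strictly decreases cost, contradicting minimality. Your proof instead writes $\cost(P)-\cost(\{CC\})=p-n$ and reduces to a \emph{per-cluster} inequality $p(C_i,CC\setminus C_i)\ge\tfrac12|C_i|(m-|C_i|)$, proved by a symmetric pair of degree-sum bounds depending on whether $|C_i|\le m/2$. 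This is a cleaner, more elementary argument: it dispenses with the minimality/contradiction device and the merge step, and it shows directly and for every partition that each summand in the double-counting identity is nonnegative. The paper's Case~1 is morally your balanced sub-case; your observation that the one-sided estimate $|C_i|(\tfrac34 m-|C_i|)$ becomes vacuous for large $|C_i|$, and must be replaced by the symmetric bound $(m-|C_i|)(|C_i|-\tfrac{m}{4})$, plays the role that the merge argument plays in the paper's Case~2. Both proofs use only \cref{lemma:lower-bound-in-cluster-degree} and the hypothesis $8\beta+\lambda\le 1/4$, so they give the same constants; what yours buys is a shorter, more transparent argument with a single unifying inequality $(\star)$.
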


\begin{proof}
    Towards a contradiction, consider a split of $CC$ into $k \ge 2$ clusters $C_1, \ldots, C_k$ whose cost is less than the cost of keeping $CC$ as a single cluster.
    Moreover, consider the cheapest such split of $CC$. Let $\delta \eqdef 8 \beta + \lambda$. We consider two cases: when each cluster in  $\{C_1, \ldots, C_k\}$ has size at most $(1 - 2 \delta) |CC|$ vertices, and the complement case.
    
    \paragraph{It holds that $|C_i| \le (1 - 2 \delta) |CC|$ for each $i$.}
        By \cref{lemma:lower-bound-in-cluster-degree}, each vertex $v \in C_i$ for each cluster $C_i$ has at least $(1 - \delta)|CC| - |C_i| \ge \delta |CC|$ neighbors in $CC \setminus C_i$. Hence, splitting $CC$ in the described way cuts at least $\frac{\delta |CC|^2}{2}$ ``+'' edges. On the other hand, also by \cref{lemma:lower-bound-in-cluster-degree}, $CC$ has at most $\frac{\delta |CC|^2}{2}$ ``-'' edges. Hence, it does not cost less to split $CC$ in the described way.
    
    \paragraph{There exists a cluster $\Cstar$ such that $|\Cstar| > (1 - 2 \delta) |CC|$.}
    Let $C_i \neq \Cstar$ be one of the clusters $CC$ is split into. Clearly, we have $|C_i| < 2 \delta |CC|$.
    Since, by \cref{lemma:lower-bound-in-cluster-degree}, each vertex $v \in C_i$ has at least $(1 - \delta) |CC|$ ``+'' edges inside $CC$, it implies that $v$ has more than $(1 - 3 \delta) |CC|$ ``+'' edges to $\Cstar$. On the other hand, there are at most $\delta |CC|$ ``-'' edges from $v$ to $\Cstar$. Hence, as long as $1 - 3 \delta \ge \delta$, it implies that it is \emph{cheaper} to merge $\Cstar$ with $C_i$ than to keep them split. This contradicts our assumption that the split into those $k$ clusters results in the minimum cost.
    
    Observe that the condition $1 - 3 \delta \ge \delta$ is equivalent to $8 \beta + \lambda \le 1/4$, which holds by our assumption.
\end{proof}

\cref{lemma:keep-connected-components} implies the following key insight.

\begin{restatable}{lma}{optfortg}
  \label{lemma:opt-for-tg}
    Let $G'$ be a \emph{non-complete}\footnote{We remark that everywhere else in the paper, correlation clustering instances are always complete graphs.} graph obtained from $G$ by removing any ``+'' edge $\{u, v\}$ (i.e., changing it into a ``neutral'' edge) where $u$ and $v$ belong to different connected components of $\tG$.
    Then, our algorithm outputs a solution that is optimal for the instance $G'$.
\end{restatable}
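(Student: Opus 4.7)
The plan is to compare the algorithm's output $\mathcal{Q} = \{CC_1, \ldots, CC_k\}$ (the connected components of $\tG$) against an arbitrary clustering $\mathcal{P} = \{C_1, \ldots, C_t\}$ of $V$, measured under the cost function of $G'$. I would route the comparison through the common refinement $\mathcal{P}'$ obtained by intersecting each $C_i$ with the components $CC_j$, so that every cluster of $\mathcal{P}'$ is contained in exactly one $CC_j$.

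The first key step is to argue $f_{G'}(\mathcal{P}') \le f_{G'}(\mathcal{P})$. Going from $\mathcal{P}$ to $\mathcal{P}'$ can only newly separate pairs $\{u,v\}$ whose endpoints lie in different connected components of $\tG$. For such pairs the ``+'' edges have been replaced by neutral edges in $G'$, so no new ``+''-cut cost is incurred, while any ``-'' edges between different components that were previously trapped inside a common cluster now contribute $0$ instead of $1$; hence the cost weakly decreases.

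The second step is to show $f_{G'}(\mathcal{Q}) \le f_{G'}(\mathcal{P}')$. Since both $\mathcal{Q}$ and $\mathcal{P}'$ refine the partition into connected components of $\tG$, every pair crossing two components is either ``+'' (hence neutral in $G'$ and free) or ``-'' (hence in different clusters of both $\mathcal{Q}$ and $\mathcal{P}'$ and therefore free). Thus both costs decompose as sums over the $CC_j$'s, where inside each $CC_j$ the edge labels of $G'$ and $G$ coincide. Applying \cref{lemma:keep-connected-components} componentwise shows that keeping $CC_j$ undivided costs no more than however $\mathcal{P}'$ subdivides it (trivially so when $|CC_j| = 1$). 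Summing over $j$ and chaining the two inequalities gives $f_{G'}(\mathcal{Q}) \le f_{G'}(\mathcal{P}') \le f_{G'}(\mathcal{P})$, which is the desired optimality.

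The main obstacle is purely bookkeeping: one must carefully classify every pair of vertices into (``+'' in $G$ and intra-component), (``-'' in $G$ and intra-component), (``+'' in $G$ and cross-component, hence neutral in $G'$), and (``-'' in $G$ and cross-component), and verify that only the first two types contribute to either $f_{G'}(\mathcal{Q})$ or $f_{G'}(\mathcal{P}')$. Once that classification is in place, the two inequalities follow from the observations above together with \cref{lemma:keep-connected-components}.
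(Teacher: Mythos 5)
Your proposal is correct and takes essentially the same route as the paper: pass to the refinement of an arbitrary clustering by the connected components of $\tG$ (this only helps in $G'$, since cross-component pairs are neutral or negative), decompose the remaining cost per component, and invoke \cref{lemma:keep-connected-components} on each. The paper states this more tersely (observing directly that an optimal solution for $G'$ may be assumed to refine the components, then applying \cref{lemma:keep-connected-components}); your version makes the bookkeeping explicit but is the same argument.
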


\begin{proof}
    It is suboptimal for a single cluster to contain vertices from different connected components;
    indeed, breaking such a cluster up into connected components would improve the objective function (all edges between connected components are negative).
    Therefore any optimal solution must either be equal to our solution or it should split some cluster in our solution.
    The claim follows, by \cref{lemma:keep-connected-components}, because subdividing a connected component of $G'$ (equivalently of $\tG$) does not improve the objective function. 
\end{proof}

\subsubsection{Approximation Guarantee} \label{sec:approximation}

In our analysis we will consider a fixed 
optimal solution (of instance $G$), denoted by $\cO$, whose cost is denoted by $\OPT$.

Recall that our algorithm returns a clustering that is optimal for $G'$ (\cref{lemma:opt-for-tg}).
Therefore to bound the approximation ratio of our solution we need to bound the cost in $G$
of an optimal clustering for $G'$. To do so, it is enough to bound the number of
``+'' edges in $G$ that are absent from $G'$ -- and every such edge has been deleted by our algorithm.
We have the following two lemmas. The main intuition behind their proofs
is that when two vertices are not in agreement,
or when a vertex is light, then there are many edges (or non-edges) in the 1-hop or 2-hop vicinity
that $\cO$ pays for. We can charge the deleted edges to them.

\begin{restatable}{lma}{removededgesstepone} \label{lemma:cut-step-1}
    The number of edges deleted in Line~\ref{step:not-in-agreement} of our algorithm
    that are not cut in $\cO$
    is at most $\frac{2}{\beta} \cdot \OPT$.
\end{restatable}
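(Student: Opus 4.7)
Let $S \subseteq E^+$ denote the set of edges removed in Line~\ref{step:not-in-agreement} whose endpoints lie in a common cluster of $\cO$; equivalently, $\{u,v\} \in S$ iff $u,v$ share a cluster $C$ of $\cO$ but are not in agreement, so $|N(u) \triangle N(v)| \ge \beta \max\{d(u), d(v)\}$. The plan is a fractional charging argument: each pair in $S$ distributes a unit of mass across ``nearby'' mistakes of $\cO$, and establishing a per-mistake cap of $\tfrac{2}{\beta}$ then yields $|S| \le \tfrac{2}{\beta}\OPT$.

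\textbf{Step 1: Witnesses and their mistakes.} For each $\{u,v\} \in S$ and each \emph{witness} $w \in N(u)\triangle N(v)$, a short case analysis produces a unique mistake of $\cO$ associated to $(u,v,w)$: if $w \in N(u)\setminus N(v)$ then $\{u,w\} \in E^+$ and $\{v,w\} \in E^-$, so either $w \in C$ and $\{v,w\}$ is a ``$-$'' mistake inside $C$, or $w \notin C$ and $\{u,w\}$ is a ``$+$'' mistake cut by $\cO$; the case $w \in N(v)\setminus N(u)$ is symmetric. By the self-loop convention, $w \neq u,v$, so distinct witnesses of the same pair yield distinct associated mistakes (each such mistake has $w$ as its ``other'' endpoint).

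\textbf{Step 2: Charging and per-mistake bound.} Each $\{u,v\} \in S$ sends weight $\frac{1}{\beta \max\{d(u),d(v)\}}$ to each of its $\ge \beta \max\{d(u),d(v)\}$ associated mistakes, so its total dispatched mass is $\ge 1$, and the aggregate sent is $\ge |S|$. Conversely, fix a mistake $e = \{x,y\}$. Any pair $\{u,v\} \in S$ contributing to $e$ must satisfy $|\{u,v\} \cap \{x,y\}| = 1$: the intersection cannot be empty (else $e$ cannot be associated with $(u,v,w)$ for any $w$), and it cannot equal $\{x,y\}$ since $\{u,v\}$ is a ``+'' edge inside a cluster and hence is not itself a mistake. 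Consider the pairs containing $x$: each sends $\le \tfrac{1}{\beta d(x)}$ (using $\max\{d(x), d(v)\} \ge d(x)$), and there are at most $\deg_S(x) \le d(x)$ of them since $S \subseteq E^+$, so they contribute $\le \tfrac{1}{\beta}$ in total. The analogous bound at $y$ brings the mass received by $e$ to at most $\tfrac{2}{\beta}$. Summing over all $\OPT$ mistakes gives $|S| \le \tfrac{2}{\beta}\OPT$.

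\textbf{Main obstacle.} The delicate choice is the fractional weight $\frac{1}{\beta \max\{d(u),d(v)\}}$: its form lets the degree factors cancel on the receiving side via $\deg_S(x) \le d(x)$, producing an $O(1/\beta)$ per-mistake cap independent of vertex degrees. A naive uniform charging would fail to exploit the ``$\max$'' in \cref{definition:agreement} and would force degree-dependent terms through the counting.
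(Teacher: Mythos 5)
Your proposal is correct and follows essentially the same charging argument as the paper: each not-in-agreement edge distributes weight $\frac{1}{\beta \max\{d(u),d(v)\}}$ to the mistakes $\cO$ pays for at its witnesses $w \in N(u)\triangle N(v)$, and the per-mistake cap of $\frac{2}{\beta}$ is obtained by bounding, at each endpoint of a mistake, the at most $d(\cdot)$ incident positive edges each sending at most $\frac{1}{\beta d(\cdot)}$. The few additional observations you include (that $w \neq u,v$ by the self-loop convention, and that a charging pair must meet the mistake in exactly one endpoint) are correct and only sharpen what the paper leaves implicit.
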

\begin{proof}
    Our proof is based on a charging argument. Each edge as in the statement will distribute fractional debt to edges (or non-edges) that $\cO$ pays for, in such a way that (1) each edge as in the statement distributes debt worth at least $1$ unit, and (2) each edge/non-edge that $\cO$ pays for is assigned at most $\frac{2}{\beta}$ units of debt, (3)
    edges/non-edges that $\cO$ does \emph{not} pay for are  assigned no debt.
    
    Let $(u,v)$ be an edge as in the statement
    (its endpoints are not in agreement).
    That is, we have $|N(u) \triangle N(v)| > \beta \cdot \max(d(u),d(v))$,
    and $u$, $v$ belong to the same cluster in $\cO$.
    Then, for each $w \in |N(u) \triangle N(v)|$,
    $\cO$ pays for one of the edges/non-edges $(u,w)$, $(v,w)$.
    (If $w$ is in the same cluster as $u,v$, then $\cO$ pays for the one of $(u,w)$, $(v,w)$ that is a non-edge; and vice versa).
    So $(u,v)$ can assign $\frac{1}{\beta \cdot \max(d(u),d(v))}$ units of debt to that edge/non-edge.
    This way, properties (1) and (3) are clear.
    
    We verify property (2).
    Fix an edge/non-edge $(a,b)$ that $\cO$ pays for.
    It is only charged by adjacent edges.
    Each edge adjacent to $a$, of which there are $d(a)$ many,
    assigns at most $\frac{1}{\beta \cdot d(a)}$ units of debt;
    this gives $\frac{1}{\beta}$ units in total.
    The same holds for edges adjacent to $b$;
    together this yields $\frac{2}{\beta}$ units.
\end{proof}

\begin{restatable}{lma}{removededgesstepthree} \label{lemma:cut-step-3}
    The number of edges deleted in Line~\ref{step:discard-light-light} of our algorithm
    that are not cut in $\cO$
    is at most $\rb{\frac{1}{\beta} + \frac{1}{\lambda} + \frac{1}{\beta \lambda}} \cdot \OPT$.
\end{restatable}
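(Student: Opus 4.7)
The plan is to run a charging argument analogous to the one used for \cref{lemma:cut-step-1}, but with one extra layer of indirection. For a deleted edge $(u,v)$ in Line~\ref{step:discard-light-light} that is not cut by $\cO$, the endpoints are in agreement (otherwise the edge would have been removed in Line~\ref{step:not-in-agreement}), so we cannot expose $\OPT$-witnesses directly inside $N(u) \triangle N(v)$. Instead, I will exploit the fact that both $u$ and $v$ are \emph{light}: pick one endpoint (say $u$) and use the set $L(u) := \{w \in N(u) : u,w \text{ not in agreement}\}$, which by the definition of lightness has size $|L(u)| > \lambda \, d(u)$.

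I will then distribute one unit of debt from the deleted edge $(u,v)$ uniformly across $L(u)$, giving each $w \in L(u)$ at most $1/(\lambda d(u))$ units. For each $w \in L(u)$ I split into two subcases. In the \emph{direct case}, $w$ belongs to a different $\cO$-cluster than $u$, so the edge $(u,w)$ itself is paid for by $\cO$, and I assign $w$'s whole charge to $(u,w)$. In the \emph{indirect case}, $w$ is in the same $\cO$-cluster as $u$; since $u$ and $w$ are not in agreement, the \cref{lemma:cut-step-1} argument gives more than $\beta \max(d(u), d(w))$ vertices $z \in N(u) \triangle N(w)$, each producing an edge/non-edge in $\{(u,z),(w,z)\}$ that $\cO$ pays for. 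I redistribute $w$'s charge uniformly across these witnesses, so each receives at most $\frac{1}{\lambda d(u) \cdot \beta \max(d(u), d(w))}$ additional units.

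The main work is verifying that each $\cO$-paid edge/non-edge $(a,b)$ accumulates only $O(1/\beta + 1/\lambda + 1/(\beta\lambda))$ units in total. I will fix $(a,b)$ and count sources. For the direct case, $(a,b) = (u,w)$ forces $a = u$ or $a = w$, and summing over the at most $d(u)$ choices of the $v$ in the deleted edge $(u,v)$ yields at most $1/\lambda$ from each side, hence $2/\lambda$ total. For the indirect case, when $(a,b) = (u,z)$, fixing $a=u$ and summing over the $\le d(u)$ choices of $v$ and the $\le d(u)$ choices of $w \in L(u)$ with $z \in N(u) \triangle N(w)$ gives a bound of $1/(\lambda\beta)$, and doubling for the two orientations gives $2/(\lambda\beta)$. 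For $(a,b) = (w,z)$, the crucial observation is that keeping $\beta \max(d(u), d(w))$ (not just $\beta d(u)$) in the denominator makes the charge symmetric enough: for fixed $w = a$, summing $\sum_{u \in L(w)} d(u)\cdot \frac{1}{\lambda d(u)\cdot \beta \max(d(u),d(w))} \le \frac{|L(w)|}{\lambda \beta d(w)} \le \frac{1}{\lambda\beta}$, again yielding $O(1/(\lambda\beta))$ after accounting for the other orientation.

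The main obstacle is precisely this last bookkeeping step: a Type-II witness $(w,z)$ can be charged by many different intermediate vertices $u$ of varying degrees, so a naive symmetric bound using only $\beta d(u)$ would leave the $d(u)$'s unbounded from below. Using $\beta \max(d(u), d(w))$ — available for free from \cref{definition:agreement} — lets me pull a $1/d(w)$ outside the sum and close the argument. Combining the direct and indirect contributions gives a total per-witness charge of $O(1/\lambda) + O(1/(\beta\lambda))$, and collecting constants yields the stated bound $(1/\beta + 1/\lambda + 1/(\beta\lambda)) \cdot \OPT$.
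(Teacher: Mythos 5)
Your proposal takes essentially the same route as the paper's proof: a two-level charging argument that spreads the unit of debt from a deleted light–light edge over the disagreeing neighbors $L(u)$, then — for those $w \in L(u)$ in the same $\cO$-cluster as $u$ — redistributes over the $N(u)\triangle N(w)$ witnesses, with exactly the paper's key observation that the $\beta\max(d(u),d(w))$ denominator symmetrizes what the paper calls the "blue debt." The one substantive difference is that the paper hedges by charging from \emph{both} endpoints of the deleted edge $\{u,v\}$ with weight $\tfrac12$ each, whereas you pick a single endpoint; together with replacing $1/|L(u)|$ by the looser $1/(\lambda d(u))$ in the $(u,z)$ accounting (where keeping $|L(u)|$ would have let it cancel and produced a $1/\beta$ term), your bookkeeping gives roughly $\tfrac{2}{\lambda}+\tfrac{4}{\beta\lambda}$ rather than the stated $\tfrac{1}{\beta}+\tfrac{1}{\lambda}+\tfrac{1}{\beta\lambda}$ — a constant-factor loss that does not affect the $O(1)$-approximation conclusion.
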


\begin{proof}
    We use a similar charging argument as in the proof of \cref{lemma:cut-step-1},
    with the difference that each edge/non-edge that $\cO$ pays for
    will be assigned at most $\frac{1}{\beta} + \frac{1}{\lambda} + \frac{1}{\beta \lambda}$ units of debt (rather than at most $\frac{2}{\beta}$).
    
    Let $(u,v)$ be an edge as in the statement.
    For each endpoint $y \in \{u,v\}$, we proceed as follows.
    As $y$ is light,
    there are edges $(y,v_1), ..., (y,v_{\lambda \cdot d(y)})$
    whose endpoints are not in agreement.
    For each $i = 1, ..., \lambda \cdot d(y)$, proceed as follows:
    \begin{itemize}
        \item If $(y,v_i)$ is not cut by $\cO$,
        then,
        as in the proof of \cref{lemma:cut-step-1},
        $(y,v_i)$
        has at least $\beta \cdot \max(d(y),d(v_i))$ adjacent edges/non-edges
        for whom $\cO$ pays.
        Each of these edges/non-edges
        is of the form $(v_i,w)$ or $(y,w)$.
        We will have the edge $(u,v)$
        charge $\frac{1}{2 \beta \lambda d(v_i) d(y)}$
        units of debt,
        which we will call \textbf{blue debt},
        to the former ones (those of the form $(v_i,w)$),
        and
        $\frac{1}{2 \beta \lambda d(y)^2}$
        units of debt,
        which we will call \textbf{red debt},
        to the latter ones (those of the form $(y,w)$).\footnote{
            Notice that the latter edges/non-edges might be charged many times by the same $y$ (for different $i$).
        }
        
        \item If $(y,v_i)$ is cut by $\cO$,
        then $\cO$ pays for $(y,v_i)$.
        We will have the edge $(u,v)$
        charge $\frac{1}{2 \lambda d(y)}$
        units of debt,
        which we will call \textbf{green debt},
        to $(y,v_i)$.
    \end{itemize}
    Let us verify property (1).
    In the first case,
    each of these edges/non-edges
    is charged at least $\frac{1}{2 \beta \lambda d(y) \max(d(y), d(v_i))}$
    units of debt,
    and since there are at least $\beta \cdot \max(d(y),d(v_i))$ of them,
    the total (blue or red) debt charged is at least
    $\frac{1}{2 \lambda d(y)}$
    per each $y \in \{u,v\}$ and
    each $i = 1, ..., \lambda \cdot d(y)$.
    This much total (green) debt is also charged in the second case.
    Since there are $2$ choices for $y$ and then $\lambda \cdot d(y)$ choices for $i$,
    in total the edge $(u,v)$ assigns at least $1$ unit of debt.
    Property (3) is satisfied by design.
    
    We are left with verifying property (2).
    Fix an edge/non-edge $(a,b)$ that $\cO$ pays for.
    It can be charged by its adjacent edges (red or green debt), as well as those at distance two (blue debt).
    Let us consider these cases separately.
    
    \textbf{Adjacent edges (red/green debt):}
    let us first look at edges adjacent to $a$
    (we will get half of the final charge this way).
    That is, $a$ is serving the role of $y$ above;
    it can serve that role
    for at most $d(a)$ debt-charging edges (serving the role of $(u,v)$, where $a = y \in \{u,v\}$).
    \begin{itemize}
    \item 
        \textbf{Red debt}:
        each of these debt-charging edges charges $(a,b)$ at most $\lambda \cdot d(a)$ times (once per $i = 1, ..., \lambda \cdot d(y)$),
        and each charge is for
        $\frac{1}{2 \beta \lambda d(a)^2}$
        units of debt.
        This gives $\frac{1}{2\beta \lambda d(a)^2} \cdot \lambda d(a) \cdot d(a) = \frac{1}{2 \beta}$ units of debt.
    \item 
        \textbf{Green debt}:
        each of these debt-charging edges charges $(a,b)$ at most once
        (if it happens that $(a,b) = (y,v_i)$ for some $i$),
        and each charge is for
        $\frac{1}{2 \lambda d(a)}$
        units of debt.
        This gives $\frac{1}{2 \lambda}$ units of debt.
    \end{itemize}
    We get the same amount from edges adjacent to $b$
    ($b$ serving the role of $y$).
    In total,
    we get a debt of $\frac{1}{\beta} + \frac{1}{\lambda}$.

    \textbf{Blue debt}:
    $(a,b)$ is serving the role of $(v_i,w)$ above.
    Let us first look at $a$ serving the role of $v_i$
    (we will get half of the final charge this way).
    Then a neighbor of $a$ must be serving the role of $y$.
    There are at most $d(a)$ possible $y$'s,
    and at most $d(y)$ possible edges $(u,v)$ for each $y$ (those with $y \in \{u,v\}$).
    Recall that each charge was for
    $\frac{1}{2 \beta \lambda d(v_i) d(y)} = \frac{1}{2 \beta \lambda d(a) d(y)}$
    units of debt;
    per $y$, this sums up (over edges $(u,v)$) to at most
    $\frac{1}{2 \beta \lambda d(a) d(y)} \cdot d(y) = \frac{1}{2 \beta \lambda d(a)}$
    total units,
    and since there are at most $d(a)$ many $y$'s,
    the total debt is at most $\frac{1}{2 \beta \lambda}$.
    We get the same amount from $b$ serving the role of $v_i$.
    In total,
    we get a debt of $\frac{1}{\beta \lambda}$.
\end{proof}

Lemmas \ref{lemma:opt-for-tg}, \ref{lemma:cut-step-1}, and \ref{lemma:cut-step-3} together imply that \cref{alg:main} is a constant-factor approximation:

\begin{restatable}{thr}{approximation} \label{theorem:approximation}
    \cref{alg:main} is a constant-factor approximation.
\end{restatable}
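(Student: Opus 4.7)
The plan is to combine the three preceding lemmas. By \cref{lemma:opt-for-tg}, our algorithm outputs a clustering that is optimal for the modified (non-complete) instance $G'$ in which every ``+'' edge between different connected components of $\tG$ has been turned neutral. Writing the cost of our output on the original instance $G$ as $\OUR = \OUR' + \Delta$, where $\OUR'$ denotes our cost on $G'$ and $\Delta$ counts the ``+'' edges of $G$ whose endpoints lie in different connected components of $\tG$ (each of which is necessarily cut by our clustering, since we cluster exactly by connected components of $\tG$), I will bound the two terms separately.

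For the first term, fix an optimal clustering $\cO$ of $G$. Its cost on $G'$ is no larger than its cost on $G$, since turning a ``+'' edge neutral either removes its contribution to the objective (if $\cO$ cut it) or leaves the contribution at zero (if $\cO$ did not). Hence the optimum of $G'$ is at most $\OPT$, and by \cref{lemma:opt-for-tg} the same upper bound carries over to $\OUR'$, i.e., $\OUR' \le \OPT$.

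For the second term, I observe that any ``+'' edge between two different connected components of $\tG$ must have been discarded in Line~\ref{step:not-in-agreement} or Line~\ref{step:discard-light-light} of \cref{alg:main}, because otherwise it would still sit inside $\tG$ and connect its endpoints in the same component. Thus $\Delta \le |D_1| + |D_3|$, where $D_1$ and $D_3$ denote the (disjoint) sets of ``+'' edges removed in those two lines. I would then split each $|D_j|$ according to whether $\cO$ cuts that edge or not: the cut edges in $D_1 \cup D_3$ contribute at most $\OPT$ in total, since $\cO$ already pays $1$ for each of them, while the uncut ones are controlled by \cref{lemma:cut-step-1} and \cref{lemma:cut-step-3}. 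This gives $|D_1| + |D_3| \le \rb{1 + \tfrac{3}{\beta} + \tfrac{1}{\lambda} + \tfrac{1}{\beta \lambda}} \OPT$.

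Combining the two bounds and fixing any constants $\beta, \lambda$ compatible with the hypotheses of all earlier lemmas (e.g.\ $\beta < 1/20$, $5\beta + 2\lambda < 1$, and $8\beta + \lambda \le 1/4$) yields $\OUR \le O(1) \cdot \OPT$. I do not foresee any real obstacle here: the only mildly delicate step is recognizing that cross-component ``+'' edges are \emph{exactly} the edges our algorithm has thrown away in Lines~\ref{step:not-in-agreement} and~\ref{step:discard-light-light}, at which point the two charging lemmas take over and close the argument.
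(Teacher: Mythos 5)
Your proposal is correct and follows essentially the same route as the paper: both decompose the cost on $G$ into the cost on the neutralized instance $G'$ (bounded by $\OPT$ via \cref{lemma:opt-for-tg}) plus the number of cross-component ``+'' edges, which is in turn bounded via \cref{lemma:cut-step-1,lemma:cut-step-3} after adding one extra $\OPT$ for the deleted edges that $\cO$ does cut. The only cosmetic slip is your parenthetical claim that the cross-component edges are ``exactly'' the discarded ones — they are only a subset (a discarded edge can still have both endpoints in the same component of $\tG$) — but since you already used the inequality $\Delta \le |D_1| + |D_3|$, the argument is unaffected and yields the same $\bigl(2 + \tfrac{3}{\beta} + \tfrac{1}{\lambda} + \tfrac{1}{\beta\lambda}\bigr)$ bound as the paper.
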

\begin{proof}
		Let $G'$ be the (non-complete) graph as defined in \cref{lemma:opt-for-tg}. Observe that the clusters that our algorithm outputs are exactly the connected components of $G'$.
		Let $D = E^+(G) \setminus E^+(G')$ be the set of edges in $G$ that go between different connected components of $G'$ (equivalently, of $\tG$).
	Further, recall that $\cO$ is a fixed optimal solution for instance $G$.

The main idea of our proof is to look at the costs of $\cO$ and of our solution in the instance $G'$, for which our solution is optimal.
	The cost of any solution differs between the two instances $G$ and $G'$ by at most $|D|$,
	which is at most the number of edges deleted by our algorithm.
	So, we can pay $|D|$ to move from $G'$ to $G$.
	On the other hand, any solution is no more expensive in $G'$ than it is in $G$.
	That is,
	for any solution $X$ we have 
	\[ \cost_{G'}(X) \le \cost_G(X) \le |D| + \cost_{G'}(X) \,. \]
	
	Denote the solution returned by \cref{alg:main} by $\OUR$.
	\cref{lemma:opt-for-tg} states that it
	is optimal 
	for $G'$.
	That is, $\cost_{G'}(\OUR) \le \cost_{G'}(\cO)$.
	Thus we have
	\begin{align*}
		\cost_G(\OUR) &\le |D| + \cost_{G'}(\OUR) \\&\le |D| + \cost_{G'}(\cO) \\&\le |D| + \cost_G(\cO) \\&= |D| + \OPT .
	\end{align*}
	
	
	Finally, note that $|D|$ is at most the number of edges deleted by our algorithm
	(since any edge of $G$ that goes between different connected components of $\tG$ must necessarily have been deleted by our algorithm).
	The latter can be upper-bounded, using Lemmas~\ref{lemma:cut-step-1} and~\ref{lemma:cut-step-3},
	by
	$\OPT + \frac{2}{\beta} \cdot \OPT + \rb{\frac{1}{\beta} + \frac{1}{\lambda} + \frac{1}{\beta \lambda}} \cdot \OPT$.
	In total,
	we get a $\rb{2 + \frac{3}{\beta} + \frac{1}{\lambda} + \frac{1}{\beta \lambda}}$-approximation.	
\end{proof}

We note that in our analysis we do not optimize for a constant; nevertheless we now present a
precise upper bound on the approximation ratio by providing a setting for the constants $\beta$ and $\lambda$. We also note that despite the large theoretical
approximation ratio, our algorithm works very well in practice.

Recall that \cref{lemma:diameter} requires that $5 \beta + 2 \lambda < 1$,
and \cref{lemma:keep-connected-components} requires $8 \beta + \lambda \le \frac{1}{4}$,
the latter condition being stronger.
Also, \cref{main:fact} requires $\beta < \frac{1}{20}$ (which is also implied by the above).
Thus we can set, e.g., $\beta = \lambda = \frac{1}{36}$.
Then the above proof of \cref{theorem:approximation} gives a
$1442$-approximation guarantee.
A more optimized setting of constants is $\beta \approx 0.0176$
and
$\lambda \approx 0.1085$,
which gives an approximation ratio $\approx 701$.

Finally, we have the following:
\begin{restatable}{rmrk}{analysisistight} \label{remark:analysis_is_tight}
    For fixed values of $\beta$ and $\lambda$, the above analysis is tight, in the sense that the term $\frac{1}{\beta \lambda}$ is necessary.
\end{restatable}

\begin{proof}
    Let us assume for simplicity that $\beta = \lambda$; otherwise the example can be adapted.
    Consider the following instance: two disjoint cliques $A_1$, $A_2$ of size $(1-\beta)d$ each,
    with a subset $X_1 \subseteq A_1$ and a subset $X_2 \subseteq A_2$,
    both of size $\beta d$, fully connected to each other.
    
    The optimal solution is to have two clusters ($A_1$ and $A_2$).
    The cost is $(\beta d)^2$
    (cutting the edges between $X_1$ and $X_2$).
    
    However, our algorithm will first delete the edges between $A_1 \setminus X_1$ and $X_1$ (any two vertices from these respective sets are not in agreement, as the $X_1$-vertex has $\beta d$ extra neighbors in $X_2$), between $X_1$ and $X_2$, and between $A_2 \setminus X_2$ and $X_2$.\footnote{
        As an aside, note that by now, the algorithm has paid around $\rb{1 + \frac{2}{\beta}} \cdot \OPT$, showing that \cref{lemma:cut-step-1} by itself is also tight for Line~\ref{step:not-in-agreement}.
    }
    Then every vertex in the graph becomes light. Thus in Line~\ref{step:discard-light-light} we delete all edges, making $\tG$ an empty graph.
    Finally, we return the singleton partitioning as the solution.
    Its cost is $(\beta d)^2 + 2 \cdot {{(d(1-\beta))^2} \choose 2} \approx \rb{\frac{1}{\beta^2} - \frac{2}{\beta} + 2} \cdot \OPT$.
\end{proof}

\subsection{MPC Implementation of \cref{alg:main}}
\label{section:implementation}
In this section we prove \cref{theorem:MPC-main}. The proof is divided into two parts: discussing the MPC implementation and proving the approximation ratio of the final algorithm.
There are two main steps that we need to implement in the MPC model: for each edge $\{u, v\}$, we need to compute whether $u$ and $v$ are in an agreement (needed for Line~\ref{step:not-in-agreement}); and to compute the connected components of $\tG$ (Line~\ref{step:connected-components-in-tG}). We separately describe how to implement these tasks.
The approximation analysis is given in \cref{section:MPC-approximation}.

\subsubsection{Computing Agreement}
\label{section:computing-agreement}
Let $e = \{u, v\}$ be an edge in $G$. To test whether $u$ and $v$ are in agreement, we need to compute how large $N(v) \triangle N(u)$ (or how large $N(u) \cap N(v)$) is (see \cref{definition:agreement}). However, it is not clear how to find $|N(v) \triangle N(u)|$ exactly for each edge $\{u, v\} \in E$ while using  total  memory of $\tO(|E|)$. So, instead, we will approximate $|N(v) \triangle N(u)|$ and use this approximation to decide whether $u$ and $v$ are in agreement. In particular, $u$ and $v$ will sample a small fraction of their neighbors, i.e., of size $O((\log{n}) / \beta)$, and then these samples will be used to approximate the similarity of their neighbourhoods. We now describe this procedure in more detail.

As the first step, we test whether $d(u)$ and $d(v)$ are within a factor $1-\beta$. If they are not, then by \cref{main:fact}~\eqref{fact:degree-bounds} $u$ and $v$ are not in agreement and hence we immediately remove the edge $\{u, v\}$ from $G$.
Next, each vertex $v$ creates two vertex-samples. To do so, for each $j$ smaller or equal than $O((\log{n}) / \beta)$ we define the set $S(j)$ as a subset of nodes obtained by sampling every node in the graph independently with probability $\min\left\{\frac{a \log{n}}{\beta \cdot j}, 1\right\}$, where $a$ is a constant to be fixed later. Then we define $S(v, j)$ for every node $v$ as $S(v, j) = S(j)\cap N(v)$ and $j_v$ to be the largest power of $1 / (1 - \beta)$ smaller or equal than $d(v)$. Then, each vertex $v$ keeps $S(v, j_v)$ and $S(v, j_v / (1 - \beta))$.
Note that by construction, for any two vertices $v$ and $u$, we either have that $w \in S(v, j)$ and $w \in S(u, j)$, or $w \notin S(v, j)$ and $w \notin S(u, j)$. To implement this, each vertex $w$ will independently in parallel flip a coin to decide whether for a given $j$ it should be sampled or not.

Once we obtain the two samples, $v$ sends the samples together with information about its degree to each of its incident edges.\footnote{We refer the reader to \cite{goodrich2011sorting} and Section~6 of \cite{czumaj2019round} for details on how to collect these samples on each edge in $O(1)$ MPC rounds.} After that, every edge $\{u, v\}$ holds: $S(v, j_v)$, $S(v, j_v / (1 - \beta))$, $S(u, j_u)$, and $S(u, j_u / (1 - \beta))$. Without loss of generality assume $d(u) \ge d(v)$. Since we have that $d(v) / (1 - \beta) \ge d(u) \ge d(v)$, then $j_v = j_u$ or $j_v / (1-\beta) = j_u$. For the sake of brevity, let $j = j_u$.
We now use $S(v, j)$ and $S(u, j)$ to estimate $|N(v) \triangle N(u)|$.

Define a random variable $X_{u, v}$ as 
\begin{equation}\label{eq:Xuv}
    X_{u, v} \eqdef |S(v, j) \triangle S(u, j)|.
\end{equation}
In case $\frac{a \cdot \log{n}}{\beta \cdot j} \ge 1$, we have $X_{u, v} = |N(v) \triangle N(u)|$, which means we directly get the \emph{exact} value of $|N(v) \triangle N(u)|$. So assume that $\frac{a \cdot \log{n}}{\beta \cdot j} < 1$.
By linearity of expectation we have
\[
    \E{X_{u, v}} = \frac{a \cdot \log{n}}{\beta \cdot j} |N(v) \triangle N(u)|.
\]

Hence, if $v$ and $u$ are in agreement, we have
\[
    \E{X_{u, v}} \le \frac{a \cdot \log{n}}{\beta \cdot j} \beta d(u)
    = \frac{a \cdot \log{n}}{j} d(u).
\]

Based on this, our algorithm for deciding whether $u$ and $v$ are in agreement is given as \cref{alg:agreement-sampling}.

\begin{algorithm}[h]
\begin{algorithmic}[1]
\caption{Agreement($u, v$) \label{alg:agreement-sampling}}
    \IF{$d(u)$ and $d(v)$ are not within factor $1-\beta$}
        \STATE Return ``No''
    \ENDIF
    \STATE {Let $\tau \eqdef \frac{a \cdot \log{n}}{j} \cdot \max\{d(u), d(v)\}$}
    \IF{$X_{u, v} \le 0.9 \cdot \tau$ \alglinelabel{line:testing-value-of-Xuv}}
        \STATE Return ``Yes''
    \ENDIF
    \STATE{Return ``No''}
\end{algorithmic}
\end{algorithm}

We now show that with high probability for every two vertices $u$ and $v$: if the algorithm returns ``Yes'', then $u$ and $v$ are in an agreement; and, if $u$ and $v$ are in $0.8$-weak agreement, then the algorithm returns ``Yes''.

\begin{restatable}{lma}{mpcimplementation}
\label{lemma:MPC-algorithm}
	For any constant $\delta > 0$, there exists an MPC algorithm that, given a signed graph $G = (V, E^+)$, in $O(1)$ rounds for all pairs of vertices $\{u, v\} \in E^+$ outputs ``Yes'' if $u$ and $v$ are in $0.8$-weak agreement, and outputs ``No'' if $u$ and $v$ are not in agreement. Letting $n = |V|$, this algorithm succeeds with probability $1 - 1/n$, uses $n^{\delta}$ memory per machine, and uses a total memory of $\tilde{O}(|E^+|)$.
\end{restatable}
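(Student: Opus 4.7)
The plan is to split the argument into (i) correctness of the per-edge test in \cref{alg:agreement-sampling} and (ii) its MPC implementation.

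For (i), I would first observe that when the degree pre-test fails, \cref{main:fact}~\eqref{fact:degree-bounds} certifies ``No'' deterministically, so I may assume WLOG $d(u) \ge d(v) \ge (1-\beta) d(u)$ and set $j = j_u \in [(1-\beta) d(u), d(u)]$. Writing $p = \min\cb{\frac{a \log n}{\beta j}, 1}$, the variable $X_{u,v}$ is a sum of independent Bernoulli$(p)$ indicators, one per vertex of $N(u) \triangle N(v)$, since each candidate vertex $w$ flips its own coin once, shared across all pairs. When $p = 1$ the test is exact. Otherwise a direct calculation (using that $\tau = p \beta \max\cb{d(u), d(v)}$) yields $\E{X_{u,v}} \le 0.8\tau$ in the $0.8$-weak-agreement case and $\E{X_{u,v}} \ge \tau$ in the not-in-agreement case. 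Since $\tau = \Theta(a \log n)$, a multiplicative Chernoff/Bernstein inequality applied with deviation $0.1 \tau$ gives misclassification probability at most $2 \exp(-c\, a \log n)$ for an absolute constant $c > 0$. I would then choose the constant $a$ large enough (as a function of $\beta$) to make this at most $n^{-3}$, and union-bound over the at most $\binom{n}{2}$ pairs to get global failure probability at most $1/n$.

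For (ii), I would implement the procedure in four $O(1)$-round MPC stages, all using the sort/aggregate/broadcast primitives of \cite{goodrich2011sorting} and \cite[Section~6]{czumaj2019round} under $O(n^\delta)$ memory per machine. First, compute $d(v)$ and $j_v$ for all $v$ by sorting incidences by endpoint and aggregating. Second, every vertex $w$ draws, locally from a shared random seed, an independent Bernoulli indicator of membership in each $S(j)$ for the $O((\log n)/\beta)$ relevant scales $j$; this guarantees the needed consistency $w \in S(v,j) \iff w \in S(u,j)$ with no communication. Third, each $v$ collects its own samples $S(v, j_v)$ and $S(v, j_v / (1-\beta))$, which have size $O((\log n)/\beta)$ with high probability by a Chernoff bound on sample sizes, and sends one copy to each incident edge via the standard broadcast primitive; the total memory is $O(|E^+| \cdot (\log n)/\beta) = \tO(|E^+|)$, and individual samples are far smaller than $n^\delta$ so the per-machine budget is respected. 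Fourth, the machine holding an edge computes $X_{u,v}$ locally and runs \cref{alg:agreement-sampling}.

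The main obstacle is ensuring that the multiplicative gap between the $0.8$-weak-agreement case and the not-in-agreement case survives both the factor-$1/(1-\beta)$ rounding of $d(u)$ down to $j$ and the normalization of $\tau$ by $\max\cb{d(u), d(v)}$ rather than a single endpoint's degree; once that gap is verified to be bounded below by an absolute constant, the Chernoff bound closes with $\Theta(\log n)$ samples and the rest of the analysis is routine. The MPC implementation itself is standard bookkeeping around well-known primitives, and the total-memory accounting is a one-line calculation.
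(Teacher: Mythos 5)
Your proposal follows essentially the same route as the paper: you use the degree pre-test to reduce to the case where $j_u$ and $j_v$ coincide up to one scale, observe that $X_{u,v}$ is a sum of independent Bernoulli$(p)$ indicators over $N(u)\triangle N(v)$ thanks to the shared coin flips, separate the two cases by the gap $\E{X_{u,v}}\le 0.8\tau$ versus $\E{X_{u,v}}\ge\tau$ with $\tau=\Theta(a\log n)$, apply a Chernoff bound at threshold $0.9\tau$, and union-bound over all pairs; the MPC bookkeeping via sort/aggregate/broadcast primitives and the $\tO(|E^+|)$ memory accounting also match the paper's \cref{section:computing-agreement}. The only small inaccuracy is the remark that $a$ must be chosen ``as a function of $\beta$'': since $j\le d(u)$ gives $\tau\ge a\log n$ unconditionally, a universal constant (the paper takes $a\ge 600$) suffices independently of $\beta$.
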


To prove \cref{lemma:MPC-algorithm}, we will use the following well-known concentration inequalities.
\begin{theorem}[Chernoff bound]\label{lemma:chernoff}
	Let $X_1, \ldots, X_k$ be independent random variables taking values in $[0, 1]$. Let $X \eqdef \sum_{i = 1}^k X_i$. Then, the following inequalities hold:
	\begin{enumerate}[(a)]
		\item\label{item:less-than-1} For any $\delta \in [0, 1]$ if $\E{X} \le U$ we have
			\[
				\prob{X \ge (1 + \delta) U} \le \exp\rb{- \delta^2 U / 3}.
			\]
		\item\label{item:lower-tail} For any $\delta > 0$ if $\E{X} \ge U$ we have
			\[
				\prob{X \leq (1 - \delta) U} \le \exp\rb{- \delta^2 U / 2}.
			\]			
	\end{enumerate}
\end{theorem}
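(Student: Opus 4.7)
My plan is to prove both inequalities via the standard exponential moment (Chernoff) technique: apply Markov's inequality to $e^{tX}$ for a well-chosen parameter $t$, factor the moment generating function using independence, and then pointwise-bound each factor. The key ingredient is the convexity inequality $e^{tx} \le 1 + (e^t - 1)x$ for $x \in [0,1]$, which holds because $e^{tx}$ is convex in $x$ and the right-hand side is its chord between $x = 0$ and $x = 1$; combined with $1 + y \le e^y$ this gives $\E{e^{tX_i}} \le \exp((e^t - 1)\E{X_i})$, and by independence $\E{e^{tX}} \le \exp((e^t - 1)\E{X})$.

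For part (a), I would take $t > 0$ and write
\[
\prob{X \ge (1+\delta)U} \le e^{-t(1+\delta)U}\,\E{e^{tX}} \le \exp\rb{(e^t - 1)\E{X} - t(1+\delta)U}.
\]
Since $t > 0$ implies $e^t - 1 > 0$, and $\E{X} \le U$ by hypothesis, the first term in the exponent is bounded by $(e^t - 1)U$. Optimizing over $t$ with the choice $t = \ln(1+\delta)$ yields $\exp\rb{(\delta - (1+\delta)\ln(1+\delta))U}$, and the proof concludes by invoking the elementary analytic inequality $(1+\delta)\ln(1+\delta) - \delta \ge \delta^2/3$ for $\delta \in [0,1]$, verified by comparing values at $\delta = 0$ and derivatives on the interval.

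For part (b), I would take $t < 0$ and use Markov in the form $\prob{X \le (1-\delta)U} = \prob{e^{tX} \ge e^{t(1-\delta)U}} \le e^{-t(1-\delta)U}\,\E{e^{tX}}$. Crucially, $e^t - 1 < 0$ now, and since the hypothesis reads $\E{X} \ge U$, we again obtain $(e^t - 1)\E{X} \le (e^t - 1)U$, the inequality going in the correct direction. The optimal choice $t = \ln(1-\delta) < 0$ (valid for $\delta \in (0,1)$) leaves $\exp\rb{(-\delta - (1-\delta)\ln(1-\delta))U}$, which combined with $(1-\delta)\ln(1-\delta) + \delta \ge \delta^2/2$ on $[0,1)$ yields the claim. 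The remaining range $\delta \ge 1$ is trivial because $(1-\delta)U \le 0$ while $X \ge 0$ almost surely.

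The proof is essentially a textbook application, so there is no deep obstacle. The one technical point deserving care is the mismatch between the stated hypothesis ($\E{X} \le U$ in (a), $\E{X} \ge U$ in (b)) and the more familiar formulation with equality: I must track the sign of $e^t - 1$ so that substituting $U$ for $\E{X}$ weakens the bound in the correct direction, which explains why $t > 0$ pairs with an upper bound on $\E{X}$ and $t < 0$ with a lower bound. The two numerical inequalities on $(1 \pm \delta)\ln(1 \pm \delta)$ reduce to showing $\ln(1+\delta) \ge 2\delta/3$ on $[0,1]$ and $-\ln(1-\delta) \ge \delta$ on $[0,1)$, both of which follow from comparing Taylor expansions.
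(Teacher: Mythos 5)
Your proposal is correct. The paper does not actually prove this statement---it is quoted as a ``well-known concentration inequality'' and used as a black box---so there is nothing to compare against; your argument is the standard exponential-moment derivation and it goes through. You correctly handle the one point that genuinely needs care here, namely that the hypotheses are $\E{X} \le U$ (resp.\ $\E{X} \ge U$) rather than $\E{X} = U$, so the sign of $e^t - 1$ must match the direction of the hypothesis when you replace $\E{X}$ by $U$; pairing $t>0$ with the upper bound in (a) and $t<0$ with the lower bound in (b) does exactly that. The only cosmetic nit: in (b) the case $\delta = 1$ is not literally ``probability of an empty event'' (the event is $X \le 0$, i.e.\ $X = 0$), but it follows either by taking $\delta \uparrow 1$ in the bound already proved or by letting $t \to -\infty$ in the moment bound, so nothing is broken.
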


\begin{lemma}\label{lemma:u-and-v-when-output=NO}
    Let $u$ and $v$ be two vertices. If \cref{alg:agreement-sampling} returns ``Yes'', then for $a \ge 600$ with probability at least $(1 - n^{-3})$ it holds that $u$ and $v$ are in agreement. (Conversely, the algorithm outputs ``No'' with probability at least $(1 - n^{-3})$ if $u$ and $v$ are not in agreement.)
\end{lemma}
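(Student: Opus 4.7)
The plan is to prove the contrapositive form stated in the parenthetical: if $u,v$ are not in agreement, then \cref{alg:agreement-sampling} outputs ``No'' with probability at least $1 - n^{-3}$. Since whether $u,v$ are in agreement is a deterministic property of the input graph, this is logically equivalent to the main statement. The argument reduces, after a short setup, to a single application of the lower-tail Chernoff bound \cref{lemma:chernoff}~\eqref{item:lower-tail} to $X_{u,v}$.

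First, I would handle the degree check on the first line of \cref{alg:agreement-sampling}. If $d(u)$ and $d(v)$ differ by more than a factor of $1-\beta$, then \cref{main:fact}~\eqref{fact:degree-bounds} already shows that $u,v$ cannot be in agreement, and the algorithm outputs ``No'' deterministically. So assume that this check passes, and without loss of generality that $d(u) \ge d(v) \ge (1-\beta)d(u)$. Let $D = \max\{d(u),d(v)\} = d(u)$. By the definition of $j = j_u$ as the largest power of $1/(1-\beta)$ at most $d(u)$, we have $j \le d(u) = D$, which yields the key bound
\[
    \tau \;=\; \frac{a \log n}{j} \cdot D \;\ge\; a \log n.
\]

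Second, I would write $X_{u,v} = \sum_{w \in N(u) \triangle N(v)} \mathbf{1}[w \in S(j)]$ as a sum of independent Bernoulli indicators with common success probability $p = \min\{\tfrac{a \log n}{\beta j},\, 1\}$. The boundary case $p = 1$ makes $X_{u,v} = |N(u) \triangle N(v)|$ exact and is handled by the paper, so assume $p < 1$. Since $u,v$ are not in agreement, $|N(u) \triangle N(v)| \ge \beta D$, so
\[
    \E{X_{u,v}} \;=\; \frac{a \log n}{\beta j} \cdot |N(u) \triangle N(v)| \;\ge\; \frac{a \log n}{\beta j} \cdot \beta D \;=\; \tau.
\]
Applying \cref{lemma:chernoff}~\eqref{item:lower-tail} with $U = \tau$ and $\delta = 0.1$ then gives
\[
    \prob{X_{u,v} \le 0.9\,\tau} \;\le\; \exp\!\rb{-\tfrac{(0.1)^2 \tau}{2}} \;=\; \exp(-\tau/200) \;\le\; \exp(-3 \log n) \;=\; n^{-3},
\]
where the last inequality uses $\tau \ge a \log n \ge 600 \log n$. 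Thus the test on the fifth line of \cref{alg:agreement-sampling} fails, and the algorithm returns ``No'' with probability at least $1 - n^{-3}$, as desired.

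The only mild obstacle is establishing the lower bound $\tau \ge a \log n$, which requires combining the preliminary degree check with the rounded definition of $j$; once this is in hand, the proof collapses to a one-line Chernoff calculation. The constant $a \ge 600$ is chosen precisely so that $(0.1)^2 \tau / 2 \ge 3 \log n$, which delivers the required $n^{-3}$ tail bound and leaves ample room for a union bound over all $O(n^2)$ pairs later, when passing from this per-pair guarantee to the $1-1/n$ guarantee of \cref{lemma:MPC-algorithm}.
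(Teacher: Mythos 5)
Your proof is correct and follows essentially the same route as the paper's: assume $u,v$ are not in agreement, note $\E{X_{u,v}} > \tau$ with $\tau \ge a\log n$, and apply the lower-tail Chernoff bound with the constant $a \ge 600$ to get a failure probability of $n^{-3}$. You merely spell out a few steps the paper leaves implicit (the degree pre-check, the bound $j \le d(u)$, and the Bernoulli decomposition), but the argument is the same.
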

\begin{proof}
    We now upper-bound the probability that $u$ and $v$ are \emph{not} in agreement, but \cref{alg:agreement-sampling} returns ``Yes''.
    
    Assume that $u$ and $v$ are \emph{not} in agreement. Then
    \[
        \E{X_{u, v}} > \tau,
    \]
    where $\tau$ is defined in \cref{alg:agreement-sampling}. (As a reminder, $X_{u, v}$ is defined in \cref{eq:Xuv}.)
    \cref{alg:agreement-sampling} passes the test on Line~\ref{line:testing-value-of-Xuv} with probability
    \[
        \prob{X_{u, v} \le 0.9 \tau}
        \stackrel{\rm{\cref{lemma:chernoff} \eqref{item:lower-tail}}}{\le} \exp\rb{-1/100 \cdot \frac{a \cdot \log{n}}{2}},
    \]
    where we used that $d(u) / j \ge 1$.
    For $a \ge 600$, the last expression is upper-bounded by $n^{-3}$.
\end{proof}

\begin{lemma}
\label{lemma:u-and-v-when-output=YES}
    Let $u$ and $v$ be two vertices that are in $0.8$-weak agreement. Then, for $a \ge 600$ with probability at least $(1 - n^{-3})$ \cref{alg:agreement-sampling} outputs ``Yes''.
\end{lemma}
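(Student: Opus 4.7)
My approach mirrors the proof of \cref{lemma:u-and-v-when-output=NO}, but applied to the upper tail of the Chernoff bound rather than the lower tail. There are two tests that \cref{alg:agreement-sampling} must pass on input $(u,v)$: the degree test at the beginning, and the sample test on Line~\ref{line:testing-value-of-Xuv}.

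First, I would dispose of the degree test. If $u$ and $v$ are in $0.8$-weak agreement, then \cref{main:fact}~\eqref{fact:degree-bounds} with $i = 0.8$ gives $(1 - 0.8\beta)\, d(u) \le d(v) \le d(u)/(1 - 0.8\beta)$, so the two degrees lie within a factor $1 - 0.8\beta > 1 - \beta$, and the algorithm proceeds to the sampling step without returning ``No''.

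Next, I would split into the two regimes of the sampling probability. In the saturated regime $\tfrac{a \log n}{\beta j} \ge 1$, every vertex of $N(u) \triangle N(v)$ is retained in the symmetric difference of the samples, so $X_{u,v} = |N(u) \triangle N(v)| < 0.8\beta \max\{d(u),d(v)\} \le 0.8\tau$ (using that $\tfrac{a \log n}{j} \ge \beta$ in this regime), and the test passes deterministically. Otherwise each vertex of $N(u) \triangle N(v)$ lies in $S(u,j) \triangle S(v,j)$ independently with probability $p \eqdef \tfrac{a \log n}{\beta j}$, so by linearity of expectation combined with the $0.8$-weak agreement hypothesis,
\[
\E{X_{u,v}} \;=\; p \cdot |N(u) \triangle N(v)| \;<\; p \cdot 0.8\beta \max\{d(u), d(v)\} \;=\; 0.8\tau .
\]
Setting $U = 0.8\tau$ and $\delta = 1/8$ in \cref{lemma:chernoff}~\eqref{item:less-than-1} then gives $(1+\delta)U = 0.9\tau$, and so
\[
\prob{X_{u,v} \ge 0.9\tau} \;\le\; \exp\!\rb{-\tfrac{\delta^2 U}{3}} \;=\; \exp\!\rb{-\tfrac{\tau}{240}} \;\le\; n^{-a/240},
\]
where the last step uses $\max\{d(u),d(v)\}/j \ge 1$ and hence $\tau \ge a \log n$.

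The main point of care is the calibration of constants: the upper-tail Chernoff bound pays a $1/3$ factor (instead of the $1/2$ used in the lower-tail version of \cref{lemma:u-and-v-when-output=NO}), and the multiplicative slack $\delta = 1/8$ is forced by the ratio $0.9/0.8$ between the threshold and the upper bound on the mean. Consequently a slightly larger constant $a$ (concretely, any $a \ge 720$, in place of the value $600$ that sufficed in the companion lemma) drives the failure probability below $n^{-3}$; no new idea beyond Chernoff concentration is required.
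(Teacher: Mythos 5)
Your proof is correct and takes essentially the same route as the paper: apply the upper-tail Chernoff bound (\cref{lemma:chernoff}~\eqref{item:less-than-1}) with $U = 0.8\tau$ and $\delta = 1/8$, so that $(1+\delta)U = 0.9\tau$. You also tidy up two points the paper leaves implicit — that the degree test at the start of \cref{alg:agreement-sampling} passes under $0.8$-weak agreement via \cref{main:fact}~\eqref{fact:degree-bounds}, and that in the saturated regime the test passes deterministically. Your observation about the constant is a fair catch: the exponent is $\delta^2 U / 3 = \tfrac{0.8\tau}{192} \ge \tfrac{0.8\, a\log n}{192}$, so this computation requires $a \ge 720$, whereas the paper's displayed bound $\exp\rb{-\tfrac{1}{64}\cdot\tfrac{a\log n}{3}}$ silently drops the factor $0.8$ (using $\tau$ in place of $U = 0.8\tau$ when lower-bounding the exponent), which is why it lands on the slightly optimistic $a \ge 600$.
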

\begin{proof}
    We have
    \[
        \E{X_{u, v}} \le 0.8 \cdot \tau,
    \]
    where $\tau$ is defined in \cref{alg:agreement-sampling}.
    Hence, \cref{alg:agreement-sampling} outputs ``No'' with probability
    \[
        \prob{X_{u, v} > 0.9 \cdot \tau} \stackrel{\rm{\cref{lemma:chernoff}} \eqref{item:less-than-1}}{\le} \exp{\rb{-1/64 \cdot \frac{a \cdot \log{n}}{3}}},
    \]
    where we used that $d(u) / j \ge 1$. For $a \ge 600$, the last expression is upper-bounded by $n^{-3}$.
\end{proof}
\begin{proof}[Proof of \cref{lemma:MPC-algorithm}]
The implementation part follows by our discussion in \cref{section:implementation} and by having $a = O(1)$.
The claim on probability success follows by using \cref{lemma:u-and-v-when-output=NO,lemma:u-and-v-when-output=YES} and applying a union bound over all $|E^+| \le n^2$ pairs of vertices.
\end{proof}

\subsubsection{Computing connected components}
\label{section:cc}
We now turn to explaining how to compute connected components in $\tG$. Recall that, by \cref{lemma:diameter}, each connected component of $\tG$ has diameter at most $4$. We leverage this fact to design a simple algorithm that in $O(1)$ rounds marks each connected component with a unique id, as follows.

\begin{algorithm}[h]
\begin{algorithmic}[1]
\caption{Connected-Components \label{alg:connected-components}}{}
    \STATE Each vertex $v$ holds an $id_v^i$, $i = 0 \ldots 4$. Let $id_v^0 = v$. 
    \FOR{$i = 1 \ldots 4$}
        \STATE For each $v$, we let $id_v^i = \max_{w \in N(v)} id_w^{i - 1}$
    \ENDFOR
    \STATE Return as a connected component all vertices $w$ that have the same $id_w^4$.
\end{algorithmic}
\end{algorithm}
Let $CC$ be a connected component of $\tG$, and let $\vstar$ be the vertex of $CC$ with the largest label (largest $id^0$).
Correctness of \cref{alg:connected-components} follows by simply noting that at the end of iteration $i$ all the vertices $x$ at distance at most $i$ from $\vstar$ will have $id_x^i = \vstar$. Since $CC$ has diameter at most $4$, it means all the vertices of $CC$ will have the same $id^4$.

\subsubsection{Approximation Analysis}
\label{section:MPC-approximation}
Note that the approximation ratio is affected only by the fact that our algorithm now \emph{estimates} agreement using \cref{alg:agreement-sampling} as opposed to computing it exactly. That is, our MPC algorithm might return that two vertices are not in agreement while in fact they are. Nonetheless, it happens only for vertices which are \emph{not} in $0.8$-weak agreement, i.e., for vertices that are close to not being in agreement.
This might only cause our algorithm to delete more edges;
and the only part of our analysis that suffers from this
are the approximation guarantees of \cref{sec:approximation}.
This can be easily fixed by replacing $\beta$ by $0.8 \cdot \beta$. Then, \cref{theorem:approximation} implies that using \cref{alg:agreement-sampling} to test agreement between vertices still obtains an $O(1)$-approximation.

Now we are ready to prove our main theorem. We restate it for convenience.

\mpcmain*
\begin{proof}
The bounds on the round complexity and memory usage follow directly from the reasoning in \cref{section:computing-agreement,section:cc} and by noticing that step~\ref{line:heavy-and-light-vertices} (determining which vertices are light) can be easily implemented in $O(1)$ MPC rounds.

The approximation guarantees follow because even if we delete some additional edges from $\tG$ that are in agreement but not
in $0.8$-weak agreement, we still obtain a constant-factor approximation as noted above.
\end{proof}

\subsection{Semi-streaming Implementation}
\label{section:streaming-implementation}
We now discuss how to implement our algorithm in the multi-pass semi-streaming setting, and effectively prove \cref{theorem:semi-streaming-main}. In the classic streaming setting, edges of an input graph arrive one by one as a stream. For an $n$-vertex graph, an algorithm in this setting is allowed to use $O(\poly \log{n})$ memory. The semi-streaming setting is a relaxation of the streaming setting, in which an algorithm is allowed to use $O(n \poly \log{n})$ memory. 
We now describe how to implement each of our algorithms in the semi-streaming setting while making multiple passes over the stream. We remark that the order of edges presented in different passes can differ.

To implement \cref{alg:agreement-sampling}, we first fix $O(\log{n})$ random bits for each vertex $v$ and each relevant $j$ (recall that there are $O((\log{n}) / \beta)$ such $j$ values) needed to decide whether $v$ belongs to $S(w, j_w)$, for some $w \in N(v)$.\footnote{As a reminder, $j_v$ is the largest power of $1 / (1 - \beta)$ not greater than $d(v)$.} This is the same as we did in \cref{section:computing-agreement}. Next, we make a single pass over the stream and collect $S(v, j_v)$ and $S(v, j_v / (1 - \beta))$ for each $v$. After this, we are equipped with all we need to compute whether two endpoints of a given edge are in agreement or not.

Next, we make another pass and mark light vertices, where the notion of a light vertex is defined in \cref{alg:main}. Note that bookkeeping which vertices are light requires only $O(n)$ space.

After these steps, in our memory we have (1) a mark for whether each vertex is light or not, and (2) a way to test whether two vertices are in agreement or not without the need to use any information from the stream. This implies that now, whenever an edge arrives on the stream, we can immediately decide whether it belongs to $\tG$ or not. Hence, we have all the information needed to proceed to implementing \cref{alg:connected-components}.

To implement \cref{alg:connected-components}, we make $4$ passes over the stream. In the $i$-th pass, for each edge $\{u, v\}$ on the stream that \emph{belongs} to $\tG$ we update $id_v^i = \max\{id_v^{i - 1}, id_u^{i - 1}\}$ and, similarly for $u$, $id_u^i = \max\{id_v^{i - 1}, id_u^{i - 1}\}$. Since $\tG$ has diameter at most $4$, this suffices to output the desired clusters of $\tG$.

This concludes our implementation of the semi-streaming algorithm.

\section{Empirical Evaluation}
\label{section:experiments}

\begin{table}[h]
\begin{center}
\begin{tabular}{|l|r|r|}
\hline
Graph        & \multicolumn{1}{l|}{\# vertices} & \multicolumn{1}{l|}{\# edges} \\ \hline
dblp-2011    & 986,324                          & 6,707,236                     \\ \hline
uk-2005      & 39,459,925                       & 921,345,078                   \\ \hline
it-2004      & 41,291,594                       & 1,135,718,909                 \\ \hline
twitter-2010 & 41,652,230                       & 1,468,365,182                 \\ \hline
webbase-2001 & 118,142,155                      & 1,019,903,190                 \\ \hline
\end{tabular}
\caption{The datasets used in our experiments. \label{tab:datasets}}
\end{center}
\end{table}

\pgfplotsset{compat=1.7, width=0.8\columnwidth, height=0.4\columnwidth,
/pgfplots/bar cycle list/.style={/pgfplots/cycle list={%
{blue,fill=blue,mark=none},%
{red,fill=red,mark=none},%
{brown,fill=brown,mark=none},%
{green,fill=green!70!black,mark=none},%
{orange,fill=orange,mark=none},%
{blue!30!white,fill=blue!30!white,mark=none},%
{yellow,fill=yellow,mark=none},%
{magenta,fill=magenta,mark=none},%
{cyan,fill=cyan,mark=none},%
{black,fill=black,mark=none},%
}
},
}
\begin{figure}[t!]
\begin{center}
\begin{tikzpicture}[scale=0.6]
\begin{axis}[
font=\Large,
legend style={
                    at={(-0.1,-0.2)},
                    anchor=north west,
                    legend columns=3,
                    /tikz/every even column/.append style={column sep=0.5cm}
                        },
enlargelimits={abs=.2},
ybar=0pt,
bar width=0.15,
xtick={0.5,2.5,4.5,6.5,8.5,10.5},
xticklabels={dblp, uk, it, twitter, webbase},
x tick label as interval,
ylabel = {CorrClustering Obj. Value},
ymin=0.5,
xmin=0.5,
xmax=10.5
]

\addplot+[error bars/.cd,
y dir=both,y explicit]
coordinates {
    (1.5, 1) +- (0.0, 0.000626)
    (3.5, 1) +- (0.0, 0.0002222917098)
    (5.5, 1) +- (0.0, 0.0000818)
    (7.5, 1) +- (0.0, 0.000000214)
    (9.5, 1) +- (0.0, 0.00103)
};
\addplot+[error bars/.cd,
y dir=both,y explicit]
coordinates {
    (1.5, 0.996) +- (0.0, 0.000302) 
    (3.5, 0.977) +- (0.0, 0.00104)
    (5.5, 0.990) +- (0.0, 0.0000210)
    (7.5, 0.999)  +- (0.0, 0.000000408)
    (9.5, 0.986) +- (0.0, 0.00128)
};
\addplot+[error bars/.cd,
y dir=both,y explicit]
coordinates {
    (1.5, 0.978)  +- (0.0, 0.000320)
    (3.5, 0.945) +- (0.0, 0.000362)
    (5.5, 0.978) +- (0.0, 0.000104)
    (7.5, 0.999)  +- (0.0, 0.00000218)
    (9.5, 0.965) +- (0.0, 0.000173)
};
\addplot+[error bars/.cd,
y dir=both,y explicit]
coordinates {
    (1.5, 1.006)  +- (0.0, 0.0227)
    (3.5, 0) +- (0.0, 0.0)
    (5.5, 0) +- (0.0, 0.0)
    (7.5, 0)  +- (0.0, 0.0)
    (9.5, 0) +- (0.0, 0.0)
};
\addplot+[error bars/.cd,
y dir=both,y explicit]
coordinates {
    (1.5, 1.0039)  +- (0.0, 0.0163)
    (3.5, 1.1122)  +- (0.0, 0.0270)
    (5.5, 1.243) +- (0.0, 0.0420)
    (7.5, 1.388)  +- (0.0, 0.09185711641)
    (9.5, 1.22) +- (0.0, 0.0152)
};
\addplot+[error bars/.cd,
y dir=both,y explicit]
coordinates {
    (1.5, 1.009) +- (0.0, 0.0209)
    (3.5, 1.098) +- (0.0, 0.0238)
    (5.5, 1.241) +- (0.0, 0.0334)
    (7.5, 1.295)  +- (0.0, 0.01106967932)
    (9.5, 1.223)  +- (0.0, 0.0176)
};
\addplot+[error bars/.cd,
y dir=both,y explicit]
coordinates {
    (1.5, 0.988)  +- (0.0, 0.0215)
    (3.5, 0)  +- (0.0, 0.0)
    (5.5, 0)  +- (0.0, 0.0)
    (7.5, 0)  +- (0.0, 0.0)
    (9.5, 0)  +- (0.0, 0.0)
};
\addplot+[error bars/.cd,
y dir=both,y explicit]
coordinates {
    (1.5, 0.987)  +- (0.0, 0.00198)
    (3.5, 1.054)  +- (0.0, 0.00820)
    (5.5, 1.244)  +- (0.0, 0.0226)
    (7.5, 1.317)  +- (0.0, 0.01790995734)
    (9.5, 1.189)  +- (0.0, 0.0291)
};
\addplot+[error bars/.cd,
y dir=both,y explicit]
coordinates {
    (1.5, 0.968)  +- (0.0, 0.00265)
    (3.5, 1.057) +- (0.0, 0.00373)
    (5.5, 1.243) +- (0.0, 0.0401)
    (7.5, 1.312)  +- (0.0, 0.0490407152)
    (9.5, 1.197) +- (0.0, 0.0328)
};
\legend{\ouralgoparam{0.05}, \ouralgoparam{0.1}, \ouralgoparam{0.2}, \clusterwparam{0.1},  \clusterwparam{0.5},  \clusterwparam{0.9},  \ppivotparam{0.1}, \ppivotparam{0.5}, \ppivotparam{0.9}}
\end{axis}
\end{tikzpicture}
\caption{The correlation clustering objective values for the different algorithms and configurations that we consider. The objective value of all the algorithms is normalized by dividing by the objective value of OurAlgo0.05 for the respective dataset. \label{fig:quality}}
\end{center}
\end{figure}
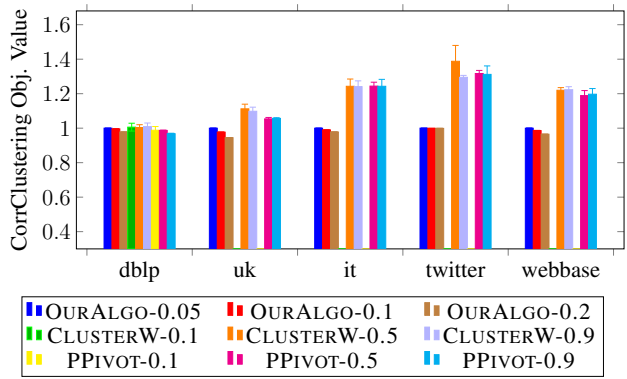

\begin{table*}[t!]\label{fig:experiments-runningtime}
\small
\centering
\begin{tabular}{|r|r|r|r|r|r|r|r|r|r|} 
\hline
 & \multicolumn{3}{c|}{\ouralgo{}} &  \multicolumn{3}{c|}{\clusterw{}} &  \multicolumn{3}{c|}{\ppivot{}} \\
\hline
\backslashbox{Dataset}{param.} & \multicolumn{1}{c|}{$0.05$} & \multicolumn{1}{c|}{$0.1$} & \multicolumn{1}{c|}{$0.2$} & \multicolumn{1}{c|}{$0.1$} & \multicolumn{1}{c|}{$0.5$} & \multicolumn{1}{c|}{$0.9$} & \multicolumn{1}{c|}{$0.1$} & \multicolumn{1}{c|}{$0.5$} & \multicolumn{1}{c|}{$0.9$} \\
\hline
{\datadblp{}} & 1.0\textbf{x}  & 1.1\textbf{x} & 1.0\textbf{x}  & 244.7\textbf{x} & 41.2\textbf{x}  & 18.8\textbf{x}  & 1083.6\textbf{x} & 119.5\textbf{x} & 42.7\textbf{x}\\
\hline
{\datauk{}} & 5.5\textbf{x}  & 6.5\textbf{x}  & 10.7\textbf{x} & -   & 445.5\textbf{x} & 213.1\textbf{x} & -    & 490.8\textbf{x} & 217.4\textbf{x}\\
\hline
{\datait{}} &10.5\textbf{x} & 14.8\textbf{x} & 12.4\textbf{x} & -   & 475.7\textbf{x} & 290.8\textbf{x} & -    & 762.8\textbf{x} & 274.9\textbf{x} \\
\hline
{\datatwitter{}} & 8.8\textbf{x}  & 15.5\textbf{x} & 13.9\textbf{x} & -   & 837.5\textbf{x} & 300.2\textbf{x} & -    & 730.2\textbf{x} & 392.8\textbf{x} \\
\hline
{\datawebbase{}} & 13.0\textbf{x} & 13.5\textbf{x} & 14.6\textbf{x} & -   & 835.1\textbf{x} & 436.8\textbf{x} & -    & 789.3\textbf{x} & 458.1\textbf{x} \\
\hline
\end{tabular}	
\caption{Average running times for the algorithms (with different parameters) that we consider. All times are reported relative to the execution time of \ouralgoparam{0.05} on the dataset \datadblp{}, which is approximately 21 seconds. We use 10 machines. \label{tab:hi-runtime}}
\end{table*}

\paragraph{Datasets.}To empirically analyze our algorithm compared to state-of-the-art parallel algorithms for correlation clustering, we considered a collection of two social networks and three web graphs.
All our datasets were obtained from The Laboratory for Web Algorithmics\footnote{http://law.di.unimi.it/datasets.php} \cite{BoVWFI,BRSLLP,BCSU3}, and some of their statistics are summarized in ~\cref{tab:datasets}. The \emph{dblp-2011} dataset is the DBLP co-authorship network from 2011, \emph{uk-2005} is a 2005 crawl of the .uk domain, \emph{it-2004} a 2004 crawl of the .it domain, \emph{twitter-2010} a 2010 crawl of twitter, and \emph{webbase-2001} is a 2001 crawl by the WebBase crawler. We converted all datasets to be undirected and removed parallel edges. The correlation clustering instance is formed by considering all present edges as ``+'' edges and all missing edges as ``-'' edges.

\paragraph{Algorithms and parameters.}
In our experiments we consider three algorithms: our algorithm from \cref{sec:algorithm} (we refer to it as \ouralgo{}), as well as the ClusterWild (\clusterw{}, in short) algorithm from \citet{pan2015parallel} and the ParallelPivot (\ppivot{}, in short) from \citet{chierichetti2014correlation}. \clusterw{} and \ppivot{} admit a parameter $\epsilon$, which affects the number of parallel rounds required to perform the computation, depending on the structure of the input graph. For \ppivot{}, $\epsilon$ also slightly affects the theoretical approximation guarantees (i.e., the approximation is $(3+\epsilon)$). We adopt the setting of $\epsilon$ from \citet{pan2015parallel}, and use $\epsilon\in\{0.1, 0.5, 0.9\}$ for both algorithms. Our algorithm has two parameters $\lambda,\beta$ which affect the approximation of the algorithm (see \cref{lemma:cut-step-1} and \cref{lemma:cut-step-3}), but the number of rounds is independent of these parameters and is a fixed constant. For simplicity, we set $\lambda=\beta \in \{0.05, 0.1, 0.2\}$. 
To refer to an algorithm with a specific parameter, we append the parameter value to the algorithm name, e.g., we say \ouralgoparam{0.05}.

\paragraph{Implementation details.}
In all our experiments the vertices are randomly partitioned among machines (we note that no algorithm requires a fixed partitioning of the input vertices onto machines). 
We made a fair effort to implement all algorithms equally well, and we did not use any tricks or special data structures.
For simplicity, we assume that the entire neighborhood of each vertex fits on a single machine (this is not required by any algorithm). Removing this assumption would increase the number of rounds of all algorithms by a constant factor and most likely would not significantly affect their relative running times. 

\paragraph{Setup and methodology.}
We used 10 machines across all experiments (except for \cref{sec:speed-up}); this is enough for the machines to collectively fit the input graph in memory.
We repeated all experiments 3 times, and we report relative average running times (wall-clock time),
as a ratio of each measurement compared to the minimum average running time observed across our experiments. 
We did not use a dedicated system for our experiments.
Executions that were running for an unreasonable amount of time (more than 72 hours) were stopped, and we report no data for such executions; these occurred only for \clusterwparam{0.1} and \ppivotparam{0.1}. 
We excluded the time that it takes to load the input graph into the memory, as this is unavoidable and uniform across all algorithms.

\subsection{Results on quality}
\cref{fig:quality} summarizes the results of our experiments in terms of solution quality, that is, the correlation clustering objective value of the solution computed by the algorithms that we consider. \ouralgo{} consistently produces better solutions compared to the two competitor algorithms \clusterw{} and \ppivot{}. 
In particular, for all datasets but \datadblp{}, \clusterw{} and \ppivot{} produce solutions whose numbers of disagreements are more than 10\% to 30\% higher compared to the best solution produced by \ouralgo{}. For \datadblp{}, our \ouralgo{} is very comparable but slightly better than the baselines.

In terms of variance in the quality of the produced clustering between the different runs, \ouralgo{} has negligible variance, which is natural given that the only source of randomness comes from identifying pairs of vertices that are in agreement. On the other hand, the behavior of \clusterw{} and \ppivot{} is not as stable, in terms of the quality of the produced solution, as demonstrated by the standard deviation illustrated in \cref{fig:quality}.

Moreover, \cref{fig:quality} shows that the behavior of \ouralgo{} is not very sensitive to the choice of the parameters $\lambda, \beta$, as for all settings of these parameters \ouralgo{} produces solutions that are significantly better compared to the state-of-the-art parallel algorithms for correlation clustering. Recall that the parameter $\epsilon$ in \clusterw{} does not affect the solution quality, while it only slightly affects the theoretical guarantees of \ppivot{}. In our experiments we did not observe any correlation between the choice of $\epsilon$ and the quality of the solution produced by \ppivot{}.

\subsection{Performance results}
We summarize the average running times of the different algorithms in~\cref{tab:hi-runtime}. For each algorithm, we report
the ratio of its average running time to the average running time of \ouralgoparam{0.05} on the \datadblp{} dataset, which is the fastest average running time we observed throughout our experiments, equal to roughly 21 seconds. It is evident that \ouralgo{} (independently of its parameters) is consistently over an order of magnitude faster compared to the state-of-the-art parallel algorithms \clusterw{} and \ppivot{}, and in several cases the gap increases to two orders of magnitude.

While the choice of $\lambda,\beta$ in \ouralgo{} has no effect on the number of rounds performed by \ouralgo{}, one can observe some deviations between the different parameter choices, which is likely due to time-specific system work-load. Nonetheless,
for each algorithm its maximum running time across all runs is within a factor at most 2 of its average running time.
While the same can be said for \clusterw{} and \ppivot{}, throughout our experiments we did not observe any case where an execution of either of \clusterw{} or \ppivot{} performed within a factor 10 of any execution of \ouralgo{}, even for the smallest instance \datadblp{}, where the running times are expected to be the closest. 
On the other hand, the choice for the parameter $\epsilon$ affects the running time of \clusterw{} and \ppivot{} and requires
proper tuning depending on the structure of the input graph (in our graphs, the choice of $\epsilon=0.9$ always results in
significantly faster performance compared to other choices).
The executions of \clusterw{} and \ppivot{} with $\eps=0.1$, on all datasets except \datadblp{}, were stopped as they did not terminate within a reasonable amount of time, and thus are not reported.

\subsection{Speedup Evaluation} \label{sec:speed-up}
\begin{table*}
\small
\centering
\begin{tabular}{|r|r|r|r|r|r|} 
\hline
\backslashbox{Dataset}{\#machines} & \multicolumn{1}{c|}{1} & \multicolumn{1}{c|}{2} & \multicolumn{1}{c|}{4} & \multicolumn{1}{c|}{8} & \multicolumn{1}{c|}{16}  \\
\hline
{\datait{}} &1\textbf{x}  ($\pm$0.303)& 2.392\textbf{x}  ($\pm$0.032)& 3.114\textbf{x}  ($\pm$0.119) & 4.823\textbf{x}  ($\pm$0.266) & 5.445\textbf{x} ($\pm$0.790)  \\
\hline
{\datatwitter{}} & 1\textbf{x} ($\pm$0.149) & 4.451\textbf{x} ($\pm$0.0151) & 4.968\textbf{x} ($\pm$0.0803) & 7.270\textbf{x} ($\pm$0.138) & 5.479\textbf{x} ($\pm$0.338)  \\
\hline
{\datawebbase{}} & 1\textbf{x} ($\pm$0.0618) & 5.280\textbf{x} ($\pm$0.225) & 4.441\textbf{x} ($\pm$0.166) & 12.161\textbf{x} ($\pm$0.0110) & 11.306\textbf{x} ($\pm$0.047)  \\
\hline
\end{tabular}	
\caption{Average speedup achieved by \ouralgoparam{0.05}, for an increasing number of machines. The standard deviation of the running time, as a fraction of the running time, is presented in parentheses. \label{tab:speedup}}
\end{table*}

In this section we study the parallelism of \ouralgo{}. We use a fixed parameter $\lambda=\beta=0.05$, as the choice of this parameter does not significantly affect the running time of the algorithm; indeed, when repeating the experiments for different parameter settings, we observed a very similar picture to the one we report below. 
To measure speed-up, we start from $1$ machine and we double the number of machines at each step, that is, we consider $1$, $2$, $4$, $8$, and $16$ machines.
Each reported running time is the average time of three repetitions of the algorithm, presented relative to the average running time of \ouralgoparam{0.05} with $1$ machine. Our results are summarized in \cref{tab:speedup}. 

Across all datasets, we observe a trend of near-linear speedup as the number of machines grows from $1$ to $8$. There is no significant speedup in the transition from $8$ to $16$ (in fact, in two out of the three cases we see worse running times when using $16$ machines), and this is likely because we reach a tipping point where the cost of communication between the machines is higher compared to the benefit gained by parallelism, for the specific datasets that we consider. Moreover, the speedup achieved across the three datasets is not uniform, and this is due to the fact that $1$ machine might be more appropriate for some datasets but not enough for other datasets; indeed, the highest speedup is achieved for the \datawebbase{} dataset, which the largest among the graphs that we consider. 

Although we observe small inconsistencies in the overall picture of our experiment, which is due to high variance in the observed running times (recall that we do not use a dedicated system for our experiments), one can observe a clear trend highlighting a near-linear speedup  as the number of machines increases.

\subsection{Cluster Statistics and Number of Rounds}
In this section, we provide various statistics regarding the performance and solutions produced by the algorithms. \cref{fig:cluster-size-distributions} presents the distribution of the cluster sizes. We observe that the size of the clusters are smaller in  \ouralgo{} compared to the baselines. Evidently, this is due to the fact that \ouralgo{} produces only dense clusters, as opposed to \clusterw{} and \ppivot{} which often produce very sparse clusters. \cref{tab:clusterstat} indicates that the datasets for which the clusters produced by \clusterw{} and \ppivot{} are the sparsest are the datasets for which the distributions of the cluster sizes differ the most between \ouralgo{} and \clusterw{} (or \ppivot{}).

\cref{tab:clusterstat} presents the number of MPC rounds required by each algorithm,  the number of clusters in each solution and the number of existing intra-cluster edges for each solution. We observe that \ouralgo{} requires a fixed number of MPC rounds that is significantly smaller (up to a factor $90$) compared to \clusterw{} and \ppivot{}. Moreover, while \ouralgo{} produces solutions with more clusters compared to \clusterw{} and \ppivot{}, the produced clusters are much denser than those produced by \clusterw{} and \ppivot{}.

\begin{figure}[h!]
     \centering
     \begin{subfigure}[b]{0.49\textwidth}
         \centering
         \includegraphics[width=\textwidth]{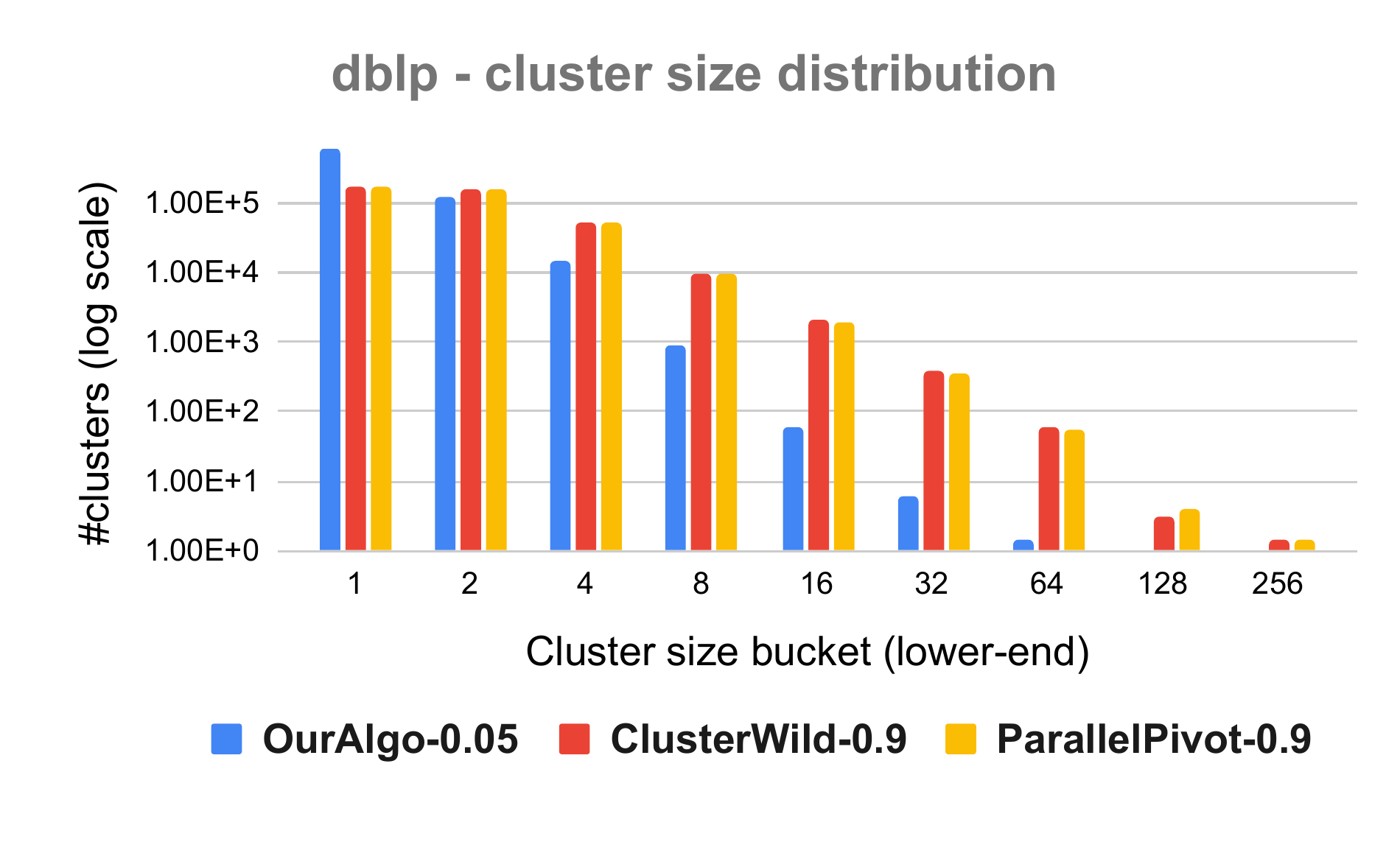}
         \label{fig:y equals x}
     \end{subfigure}
     \hfill
     \begin{subfigure}[b]{0.49\textwidth}
         \centering
         \includegraphics[width=\textwidth]{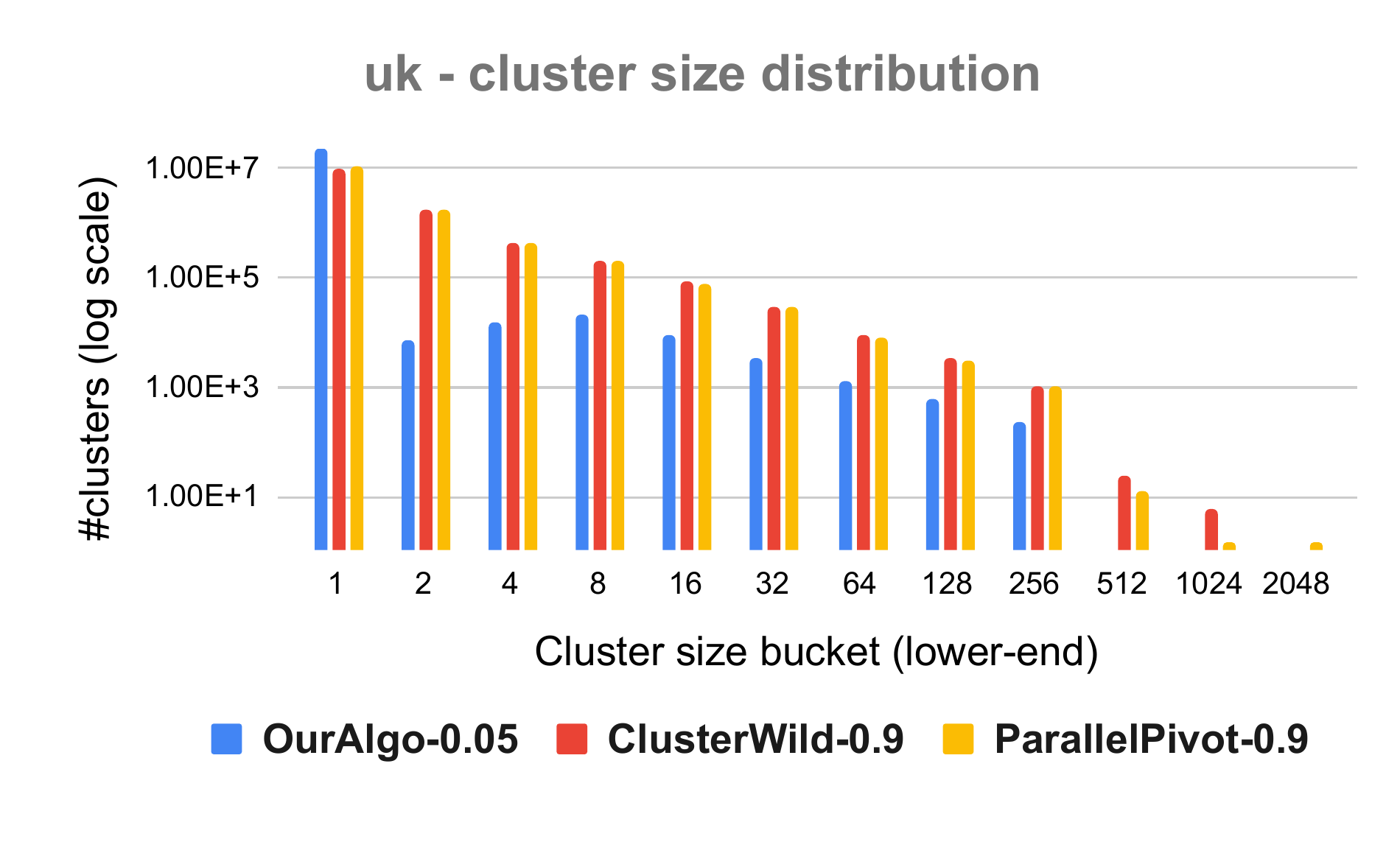}
         \label{fig:three sin x}
     \end{subfigure}
     \hfill
     \begin{subfigure}[b]{0.49\textwidth}
         \centering
         \includegraphics[width=\textwidth]{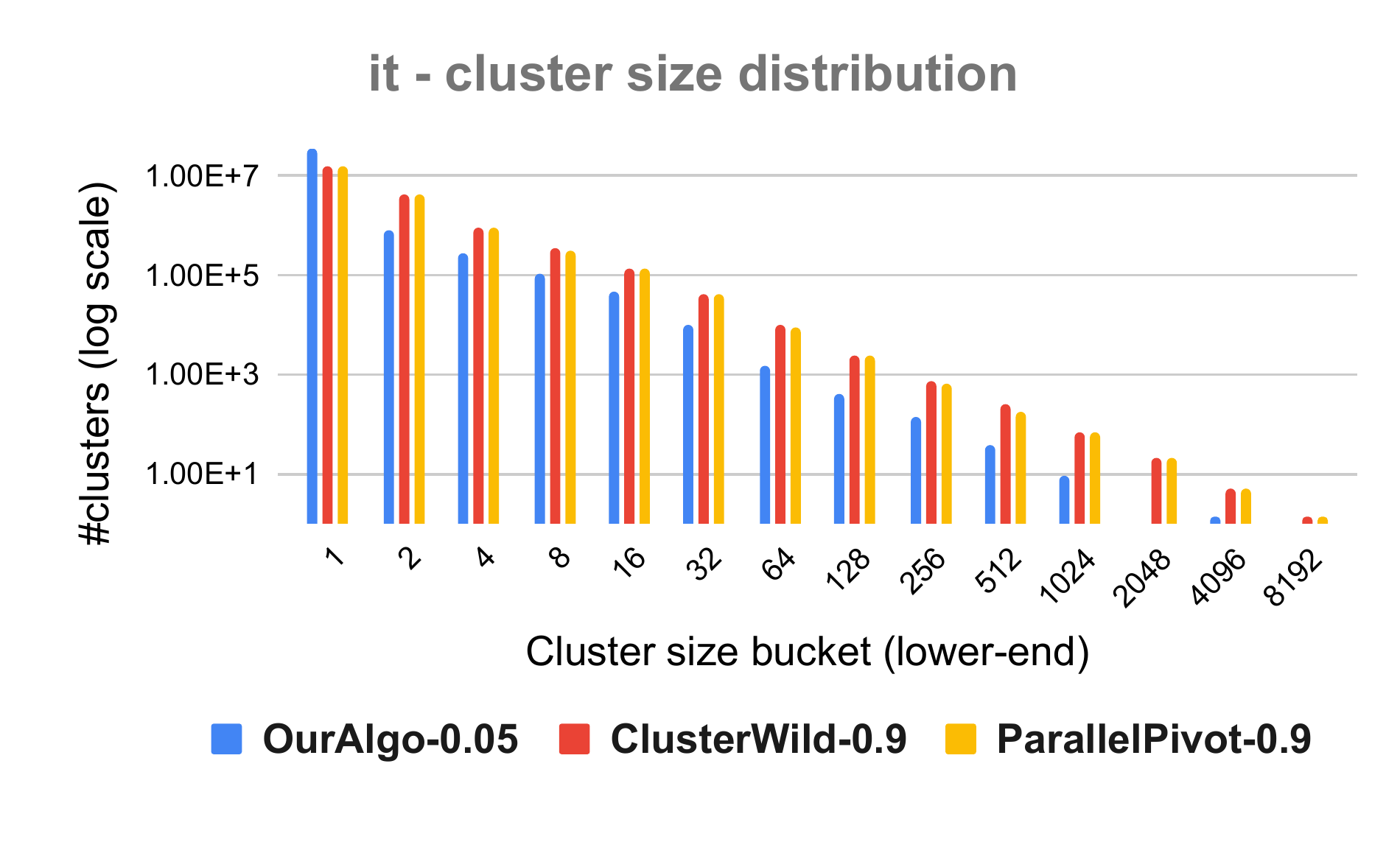}
         \label{fig:five over x}
     \end{subfigure}
     \hfill
     \begin{subfigure}[b]{0.49\textwidth}
         \centering
         \includegraphics[width=\textwidth]{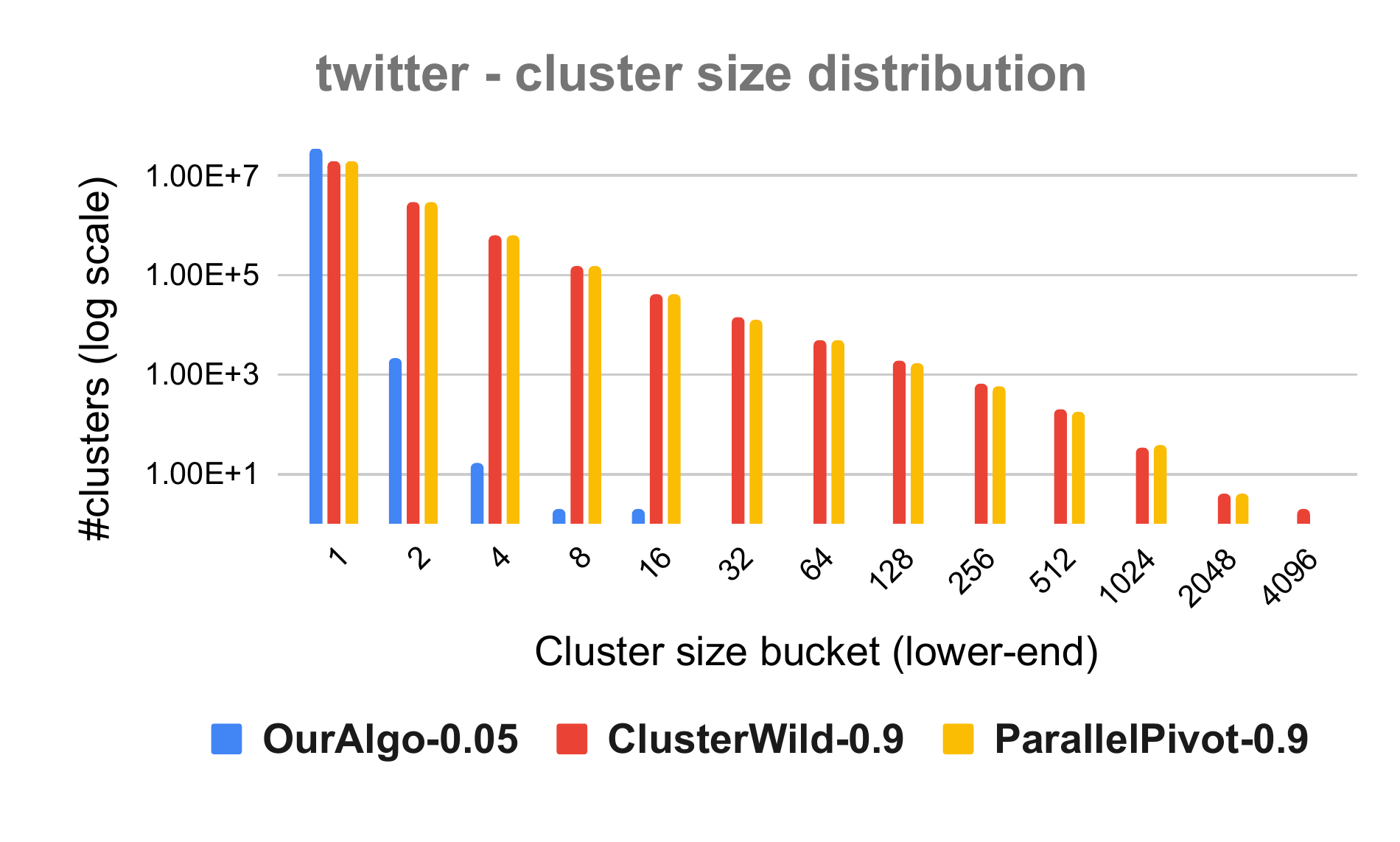}
         \label{fig:five over x}
     \end{subfigure}
     \hfill
     \begin{subfigure}[b]{0.49\textwidth}
         \centering
         \includegraphics[width=\textwidth]{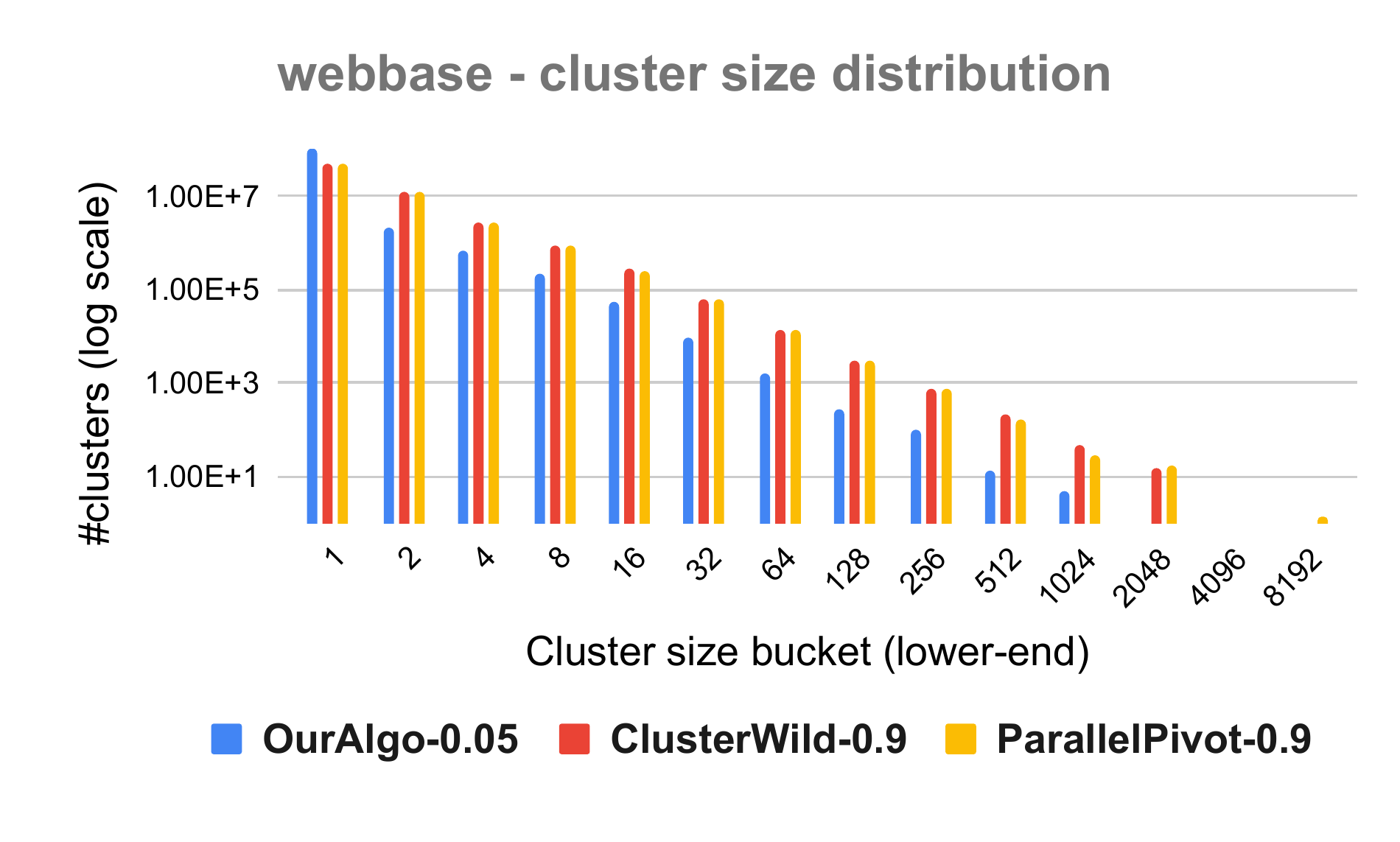}
         \label{fig:five over x}
     \end{subfigure}
     \vspace*{-0.5em}
        \caption{The cluster size distributions produced by the algorithms \ouralgoparam{0.05}, \clusterwparam{0.9}, and \ppivotparam{0.9} on all datasets that we considered.}
        \label{fig:cluster-size-distributions}
\end{figure}

\begin{table}[h!]
\small
\centering
\begin{tabular}{rrrrrrrlll}
\cline{2-10}
\multicolumn{1}{r|}{} & \multicolumn{3}{c|}{\datadblp{}} & \multicolumn{3}{c|}{\datauk{}} & \multicolumn{3}{c|}{\datait{}} \\ \cline{2-10} 
\multicolumn{1}{l|}{} & \multicolumn{1}{c|}{\#rounds} & \multicolumn{1}{c|}{\#clusters} & \multicolumn{1}{c|}{in-edges} & \multicolumn{1}{c|}{\#rounds} & \multicolumn{1}{c|}{\#clusters} & \multicolumn{1}{c|}{in-edges} & \multicolumn{1}{c|}{\#rounds} & \multicolumn{1}{c|}{\#clusters} & \multicolumn{1}{c|}{in-edges} \\ \hline
\multicolumn{1}{|r|}{\ouralgoparam{0.05}} & 33 & 723,511 & \multicolumn{1}{r|}{1.000} & 33 & 22,999,216 & \multicolumn{1}{r|}{0.955} & \multicolumn{1}{r}{33} & \multicolumn{1}{r}{36,467,636} & \multicolumn{1}{r|}{0.972} \\ \hline
\multicolumn{1}{|r|}{\ouralgoparam{0.1}} & 33 & 720,229 & \multicolumn{1}{r|}{0.999} & 33 & 22,764,081 & \multicolumn{1}{r|}{0.933} & \multicolumn{1}{r}{33} & \multicolumn{1}{r}{34,244,835} & \multicolumn{1}{r|}{0.957} \\ \hline
\multicolumn{1}{|r|}{\ouralgoparam{0.2}} & 33 & 704,489 & \multicolumn{1}{r|}{0.996} & 33 & 22,228,865 & \multicolumn{1}{r|}{0.895} & \multicolumn{1}{r}{33} & \multicolumn{1}{r}{31,042,932} & \multicolumn{1}{r|}{0.735} \\ \hline
\multicolumn{1}{|r|}{\clusterwparam{0.9}} & 725 & 382,491 & \multicolumn{1}{r|}{0.516} & 1441 & 12,778,648 & \multicolumn{1}{r|}{0.461} & \multicolumn{1}{r}{1837} & \multicolumn{1}{r}{22,457,586} & \multicolumn{1}{r|}{0.287} \\ \hline
\multicolumn{1}{|r|}{\ppivotparam{0.9}} & 1160 & 386,275 & \multicolumn{1}{r|}{0.537} & 2280 & 12,944,056 & \multicolumn{1}{r|}{0.452} & \multicolumn{1}{r}{2610} & \multicolumn{1}{r}{22,675,174} & \multicolumn{1}{r|}{0.316} \\ \hline
\multicolumn{1}{l}{} & \multicolumn{1}{l}{} & \multicolumn{1}{l}{} & \multicolumn{1}{l}{} & \multicolumn{1}{l}{} & \multicolumn{1}{l}{} & \multicolumn{1}{l}{} &  &  &  \\ \cline{2-7}
\multicolumn{1}{l|}{} & \multicolumn{3}{c|}{\datatwitter{}} & \multicolumn{3}{c|}{\datawebbase{}} &  \multicolumn{1}{c}{} & \multicolumn{1}{c}{} & \multicolumn{1}{c}{} \\ \cline{2-7}
\multicolumn{1}{l|}{} & \multicolumn{1}{c|}{\#rounds} & \multicolumn{1}{c|}{\#clusters} & \multicolumn{1}{c|}{in-edges} & \multicolumn{1}{c|}{\#rounds} & \multicolumn{1}{c|}{\#clusters} & \multicolumn{1}{c|}{in-edges} &  &  &  \\ \cline{1-7}
\multicolumn{1}{|r|}{\ouralgoparam{0.05}} & 33 & 34,981,120 & \multicolumn{1}{r|}{0.990} & 33 & 106,613,511 & \multicolumn{1}{r|}{0.988} &  &  &  \\ \cline{1-7}
\multicolumn{1}{|r|}{\ouralgoparam{0.1}} & 33 & 34,980,638 & \multicolumn{1}{r|}{0.990} & 33 & 103,908,793 & \multicolumn{1}{r|}{0.957} &  &  &  \\ \cline{1-7}
\multicolumn{1}{|r|}{\ouralgoparam{0.2}} & 33 & 34,978,139 & \multicolumn{1}{r|}{0.973} & 33 & 99,049,622 & \multicolumn{1}{r|}{0.866} &  &  &  \\ \cline{1-7}
\multicolumn{1}{|r|}{\clusterwparam{0.9}} & 1876 & 24,572,801 & \multicolumn{1}{r|}{0.077} & 1721 & 68,800,036 & \multicolumn{1}{r|}{0.346} &  &  &  \\ \cline{1-7}
\multicolumn{1}{|r|}{\ppivotparam{0.9}} & 2580 & 24,701,912 & \multicolumn{1}{r|}{0.068} & 2510 & 69,394,341 & \multicolumn{1}{r|}{0.331} &  &  &  \\ \cline{1-7}
\end{tabular}	
\caption{\label{tab:clusterstat}This table presents the number of MPC rounds (\#rounds), number of clusters (\#clusters) and the fraction of intra-cluster edges found in each solution (in-edges).}
\end{table}

\section{Conclusions and Future Work}
We present a new parallel algorithm for correlation clustering and we prove both theoretically and experimentally that our algorithm is extremely fast and returns high-quality solutions. Interesting open problems are to improve the approximation guarantees of our algorithm and to establish a more formal connection between our results and well-known similar heuristics~\cite{xu2007scan}. Another direction would be to design an MPC algorithm in the sublinear regime for the \emph{weighted}
version of the problem.

\section*{Acknowledgments}
We thank the anonymous reviewers for their valuable comments.
S.~Mitrovi\' c was supported by the Swiss NSF grant No.~P400P2\_191122/1,  MIT-IBM Watson AI Lab and research collaboration agreement No. W1771646, NSF award CCF-1733808, and FinTech@CSAIL.
\bibliographystyle{icml2021}
\bibliography{bibliography}

\end{document}